\newcommand{\beq}{\begin{equation}}
\newcommand{\eeq}{\end{equation}}
\newcommand{\ket}[1]{|#1\rangle}
\newcommand{\bra}[1]{\langle#1|}
\newcommand{\cL}{\mathcal{L}}
\newcommand{\cQ}{\mathcal{Q}}
\newcommand{\I}{\mathbbm{1}}
\newcommand{\eqref}[1]{(\ref{#1})}
\newtheorem{theorem}{Theorem}
\newtheorem{lemma}[theorem]{Lemma}
\newtheorem{proposition}[theorem]{Proposition}
\newtheorem{conjecture}[theorem]{Conjecture}
\begin{document}

\title[Quantifying non-classical and beyond-quantum correlations]{Quantifying non-classical and beyond-quantum correlations in the unified operator formalism}

\author{Joshua Geller and Marco Piani}
\address{Institute for Quantum Computing and Department of Physics and Astronomy, University of Waterloo, N2L 3G1 Waterloo ON, Canada}

\ead{mpiani@uwaterloo.ca}
\begin{abstract}
Acin et al.~\cite{acin} introduced a unified framework for the study of no-signalling correlations. Such a framework is based on the notion of local quantum measurements, but, in order to account for beyond-quantum correlations, global pseudo-states that are not positive semidefinite are allowed. After a short review of the formalism, we consider its use in the quantification of both general non-local and beyond-quantum correlations. We argue that the unified framework for correlations provides a simple approach to such a quantification, in particular when the quantification is meant to be operational and meaningful in a resource-theory scenario, i.e., when considering the processing of resources by means of non-resources. We relate different notions of robustness of correlations, both at the level of \mbox{(pseudo-)states} and abstract probability distributions, with particular focus on the beyond-quantum robustness of correlations and pseudo-states. We revisit known results and argue that, within the unified framework, the relation between the two levels---that of operators and that of probability distributions---is very strict. We point out how the consideration of robustness at the two levels leads to a natural framework for the quantification of entanglement in a device-independent way. Finally, we show that the beyond-quantum robustness of the non-positive operators needed to achieve beyond-quantum correlations coincides with their negativity and their distance from the set of quantum states. As an example, we calculate the beyond-quantum robustness for the case of a noisy Popescu-Rohrlich box. 
\end{abstract}

%Uncomment for PACS numbers title message
%\pacs{00.00, 20.00, 42.10}
% Keywords required only for MST, PB, PMB, PM, JOA, JOB? 
%\vspace{2pc}
%\noindent{\it Keywords}: Article preparation, IOP journals
% Uncomment for Submitted to journal title message
%\submitto{\JPA}
% Comment out if separate title page not required
\maketitle

\section{Introduction}
Bell~\cite{bell} proved that correlations that arise classically cannot reproduce some of the correlations allowed by quantum mechanics~\cite{nonlocalityreview}. One could say that quantum correlations require enlarging our view of what correlations can exist between distant systems. Taking a slightly different point of view, one can interpret this as a \emph{limitation} of classical correlations: they are not as strong as instead possible in the quantum setting, which we believe to constitute the real physical setting of our world.

Quantum correlations, albeit stronger than classical, are not arbitrary~\cite{tsirelson}. Most importantly, they respect the so-called no-signalling principle, i.e., they cannot lead to instantaneous communication. While one can argue that the no-signalling condition is
%: respecting the no-signalling principle is, as we will see,
somewhat built-in in the quantum formalism,
%Depending on how one approaches the problem, this might or might not sound interesting/surprising. Indeed, besides appearing a reasonable request, e.g., in order for the theory to be compatible with special relativity,
%respecting the no-signalling principle is, as we will see, somewhat built-in in the quantum formalism.
 the past has seen attempts to find ways to use non-local correlations, arising from \emph{entanglement}~\cite{entanglementreview}, for faster-than-light communication~\cite{herbert}. The refutation of such attempts has been instrumental in the establishment of the no-cloning theorem~\cite{wootterszurek,dieks} and in the development of quantum information processing and of quantum cryptography~\cite{NC}. It is then fair to say that the study of the limits of quantum correlations themselves has been extremely fruitful.

%No-signalling is not the only limit/constraint of quantum correlations, and
Quite interestingly, Popescu and Rohrlich~\cite{PR} provided examples of---unphysical, it is believed---correlations that, although no-signalling, go beyond what possible quantumly. Since that seminal paper, a lot of work has been devoted to the attempt to ``pin down'' quantum correlations,
%among all no-signalling correlations,
often assuming as larger ``potential'' set of allowed correlations than that of no-signalling correlations. From a foundational standpoint, the effort is mostly directed to finding some operational and/or information-theoretic principle that, together with, or extending, the no-signallling principle, would uniquely identify quantum correlations among general no-signalling ones (see, e.g.,~\cite{informationcausality} and~\cite{localorthogonality}). From the above remarks it should be clear that this is a worthwhile effort  not only  from a foundational perspective: a better understanding of quantum correlations (and of their limits) can have enormous repercussions from the applicative point of view. One example of this is the development of device-independent quantum cryptography~(see~\cite{varizani} and references therein).

Acin et al.~\cite{acin} (see also~\cite{short}) introduced a unified formalism to describe all no-signalling correlations, even beyond quantum, in, we could say,  ``(at least partially) familiar quantum terms''. Acin et al.'s formalism is based on the notion of local quantum measurements, but, in order to account for beyond-quantum correlations, global pseudo-states that are not positive semidefinite are allowed. In this paper we show how that formalism can be used to understand and quantify correlations in a unified way, also from a resource perspective.

\section{Preliminaries}

\subsection{Correlations}

We will study correlations in a bipartite setting, involving the two parties Alice and Bob, but many of the considerations and results can be easily extended to the multipartite scenario. Correlations will be described abstractly in terms of conditional probability distributions for the choice of a local measurements and the recording of the outcomes of such measurements. Measurement choices will be denoted by $x$ and $y$ for Alice and Bob, with outcomes labelled by $a$ and $b$, respectively. Both measurement choices and possible outcomes range within definite alphabets, e.g., $x\in\{0,1,\ldots,|X|-1\}$, defining some input and output dimensions, e.g., $|X|$ for the number of choices of measurement for Alice.  In this paper we will not be concerned with such dimensions, which are nonetheless considered fixed for a given ``experiment''.

In this framework, the correlations that exist between Alice and Bob are described by means of an ordered collection (a vector)  $\vec{p}$ of numbers $p(a,b|x,y)$, each giving the probability of seeing the pair of outcomes $(a,b)$ upon the choice of the measurement pair $(x,y)$. Thus, these numbers are all non-negative, $p(a,b|x,y)\geq0$, and satisfy the normalization $\sum_{a,b}p(a,b|x,y)=1$ for all $(x,y)$, from which also the bound $p(a,b|x,y)\leq1$ follows.

We have already used the terms ``input'' and ``output'', and indeed we will think of $\vec{p}$ as of a box accessed independently by the two parties---users---Alice and Bob, as depicted in Fig.~\ref{fig:box}(a).
\begin{figure}
\begin{center}
\subfloat[Correlations as boxes]{
%	\begin{subfigure}[Correlation box]%{0.4\textwidth}
%	       \label{fig:box}
		\includegraphics[scale=0.6]{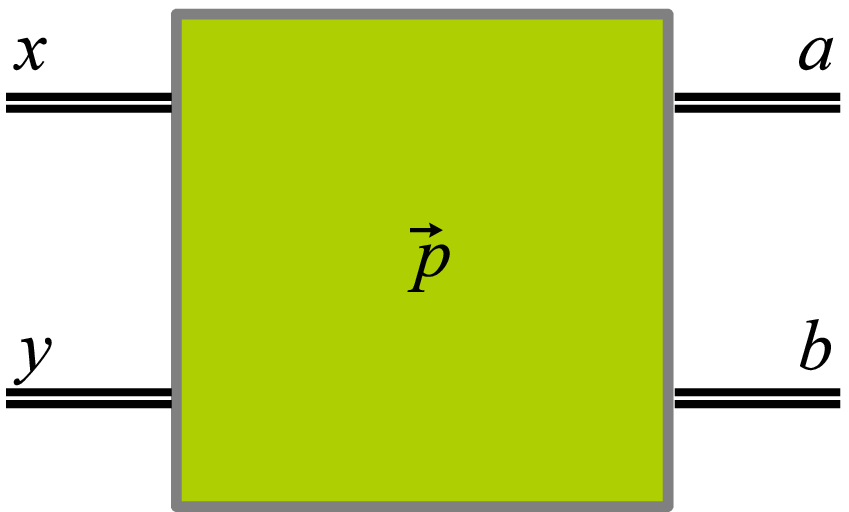}}
%	\end{subfigure}
\hspace{0.5cm}
\subfloat[$O$-formalism]{
%	\begin{subfigure}[O-formalism]%{0.4\textwidth}
		\includegraphics[scale=0.6]{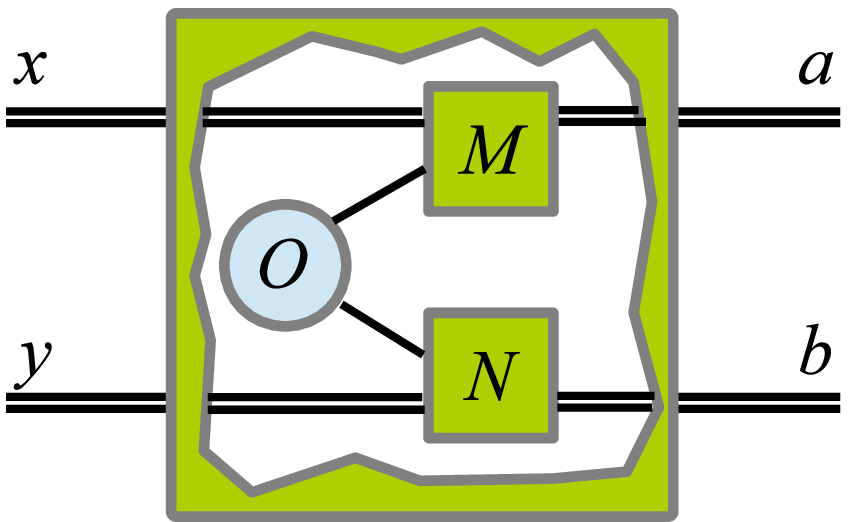}}
%		\label{fig:oformalism}
%	\end{subfigure}
\end{center}
\caption{(a) Correlations are abstractly described as boxes with inputs (the choice of local measurements) and outputs (the outcomes of the measurements). Such inputs/outputs are classical (i.e., can be described and transmitted by classical bits) and are represented by double lines. Information flows from left to right. (b) All no-signalling correlations/boxes (even those the are not quantum) can be realized via local measurements acting on a state (a normalized positive semidefinite operator) or a pseudo-state (that is, a trace-one Hermitian operator that is not positive semidefinite) $O$~\cite{acin}: it always possible to ``open the box'' to reveal its (pseudo-)quantum implementation. Single lines depict the transmission/flow of quantum information, including the case of pseudo-states.}
\label{fig:box}
\end{figure}
For any fixed number of possible inputs/ouputs, we can consider the space of all allowed $\vec{p}$, and introduce a meaningful, physically-motivated classification of these probability vectors~\footnote{Each $\vec{p}$ is actually a collection of \emph{conditional} probability vectors, but for the sake of brevity we may refer to it as probability vector.}.

\subsubsection{{\bf Local correlations}} Local correlations are those that can be expressed as
\beq
\label{eq:localp}
p_{AB}(a,b|x,y)=\sum_i q_i p^i_A(a|x)p^i_B(b|y)
\eeq
where $p^i_A(a|x)$ are \emph{local} conditional probability distributions for Alice (similarly for Bob) that depend also on $i$, and $q_i$ are probabilities, i.e., $q_i\geq 0$ and $\sum_i q_i=1$. That is, local correlations are convex combinations of uncorrelated local probabilities distributions. With the use of subscripts, we have emphasized here the corresponding parties; such subscripts will be omitted whenever confusion does not arise. Notice that in vector notation we could write $\vec{p}_{AB}= \sum_iq_i \vec{p^i}_A\otimes \vec{p^i}_B$. We will denote the space of all local correlations (for a fixed choice of number of inputs/outputs) by $\mathcal{L}$. If a probability $\vec{p}$ does not admit the decomposition (\ref{eq:localp}), then we call it \emph{non-local}.

\subsubsection{{\bf Quantum correlations}} Quantum correlations are those that can be expressed as
\beq
\label{eq:quantump}
p_{AB}(a,b|x,y)=\Tr(M^A_{a|x}\otimes N^B_{b|y} \rho_{AB}),
\eeq
where $\rho_{AB}$ is a distributed quantum state,
%~\footnote{Notice that not all quantum states, i.e., normalized positive functionals on operators, admit a description within the density matrix formalism; here we will restrict ourselves to such a case, though, and use interchangeably ``(quantum) state'' and ``density matrix''.},
and $M^A_{a|x}$ and $N^B_{b|y}$ are the elements of  positive-operator-valued measures (POVMs), one for each choice of measurement: $M^A_{a|x}\geq 0$ $\forall a,x$, $\sum_a M^A_{a|x} =\mathbbm{1}_A$ $\forall x$, and, similarly $N^B_{b|y}\geq 0$ $\forall b,y$, $\sum_b N^B_{b|y} =\mathbbm{1}_B$ $\forall y$. Since it will be important in the following, we remark that $\rho_{AB}$ is a quantum state if and only if $\rho_{AB}\geq 0$ an $\Tr(\rho_{AB})=1$. We will denote the space of all quantum correlations (for a fixed choice of number of inputs/outputs) by $\mathcal{Q}$. Notice that, even if the number of inputs/outputs is fixed, we allow the realization of the box (i.e., probability distribution) with quantum systems of arbitrary dimensions (correspondingly, $\mathbbm{1}_A$ is the identity operator on the ``underlying'' arbitrary quantum system of Alice).

\subsubsection{{\bf No-signalling correlations}} No-signalling correlations are those that respect
\begin{eqnarray}
\sum_a p(a,b|x,y) &= \sum_a p(a,b|x',y)\quad &\forall x,x',\\
\sum_b p(a,b|x,y) &= \sum_b p(a,b|x,y')\quad &\forall y,y'.
\end{eqnarray}
These conditions allow one to define local conditional probabilities distributions, e.g., $p_A(a|x)=\sum_b p(a,b|x,y)$. Notice that in absence of the no-signalling conditions, there should be in general a dependence on $y$ on the left-hand side of the last equation. The lack of such a dependence is exactly what constitutes no-signalling: even if Bob changes his input, he cannot send a message to Alice by modifying the probability distributions she sees locally. We will denote the space of all no-signalling correlations (for a fixed choice of number of inputs/outputs) by $\mathcal{NS}$.

We observe that both local and quantum correlations are no-signalling. This follows immediately by inspection, given (\ref{eq:localp}) and (\ref{eq:quantump}). Notice that in the quantum case this is built-in in the formalism thanks to the fact that, e.g., $\sum_b N^B_{b|y} =\mathbbm{1}_B$ $\forall y$, with the right-hand side of the equality independent of $y$. On the other hand, classical correlations are a subset of quantum correlations (with strict containment, as Bell first showed~\cite{nonlocalityreview}), so overall we have (see Fig.~\ref{fig:box}b)
\begin{figure}
\begin{center}
\includegraphics[scale=0.6]{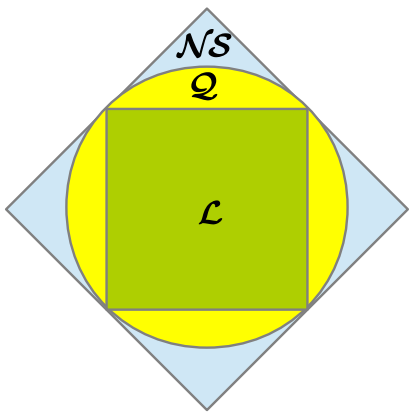}
\caption{Hierarchy of correlations. Local correlations ($\mathcal{L}$) are a strict subset of quantum correlations ($\mathcal{Q}$), which in turn are a strict subset of no-signalling correlations ($\mathcal{NS}$). The figure depicts also the fact that $\mathcal{L}$ and $\mathcal{NS}$ are polytopes; this is not the case for $\mathcal{Q}$~\cite{nonlocalityreview}.}
\label{fig:sets}
\end{center}
\end{figure}
\beq
\mathcal{L}\subsetneq \mathcal{Q} \subsetneq \mathcal{NS}.
\eeq
The last is also a strict containment, as proven by Popescu and Rohrlich~\cite{PR} (see Section~\ref{sec:PR}).

\subsection{Separability and entanglement}
Bipartite quantum states can be classified in terms of their bipartite structure. We call \emph{separable} (or \emph{unentangled}) the states that can be expressed as convex combination of local product states, i.e.,
\beq
\sigma_{AB} = \sum_i p_i \sigma_A^i\otimes \sigma_B^i
\eeq
and \emph{entangled} the states that do not admit such a decomposition. Werner~\cite{werner} introduced this general notion for mixed-state entanglement, and showed that the relation between the notions of locality/non-locality and separability/entanglement is at best non-trivial. On one hand, it is easy to verify that every separable state gives raise only to local correlations, independently of the local measurements.
 %(even taking into account the freedom in the number of inputs/outputs).
 Also, it is easy to check that every local $\vec{p}$ admits a realization (via (\ref{eq:quantump}))  by means of some separable state. On the other hand, quite more surprisingly, there are entangled states that  via (\ref{eq:quantump}) give only raise to local correlations~\cite{werner}. This fact has led to a number of attempts to find a more ``faithful'' relation between quantum states and correlations, or, more in general, between quantum states and non-local features of correlations, going beyond (\ref{eq:quantump}), so to potentially reveal the ``hidden'' non-locality of all entangled states~\cite{nonlocalityreview}.

We will denote by $\mathbb{D}$ the set of all density matrices, and by $\mathbb{S}$ the set of all \emph{separable} density matrices. As clear from what we already said, $\mathbb{S}\subsetneq\mathbb{D}$.

\subsection{Unified framework for correlations}

Equation~(\ref{eq:quantump}) establishes a (not one-to-one) two-way mapping between quantum states and quantum correlations, thus including local correlations. In~\cite{acin} (see also~\cite{short}) a unified framework for correlations (valid also in the multipartite setting) very similar to (\ref{eq:quantump}) was introduced for \emph{all} no-signalling correlations. It reads (see Figure~\ref{fig:box}(a))
%\begin{figure}
%\begin{center}
%\includegraphics[scale=0.6]{oformalism}
%\caption{Correlations are abstractly described as boxes with inputs (the choice of local measurements) and outputs (the outcomes of the measurements). Such inputs/outputs are classical (i.e., can be described and transmitted by classical bits) and are represented by double lines. Information flows from left to right.}
%\label{fig:oformalism}
%\end{center}
%\end{figure}
\beq
\label{eq:oformalism}
p_{AB}(a,b|x,y)=\Tr(M^A_{a|x}\otimes N^B_{b|y} O_{AB}),
\eeq
where we still use local POVMs, but the only request on the Hermitian operator $O_{AB}$, which we could call \emph{pseudo-state}, is that it is normalized, $\Tr(O_{AB})=1$, and that, together with a---possibly \emph{very specific}~\footnote{A pseudo-state that is not positive semidefinite may lead to non-positive pseudo-probability distributions if the POVMs entering \eqref{eq:oformalism} are chosen arbitrarily.}---choice of POVMs, gives raise to \emph{bona fide} correlations, i.e., conditional probabilities that are positive---normalization is ensured by the fact that  $O_{AB}$ has unit trace. In~\cite{acin} it was proven that a (\emph{bona-fide}) $\vec{p}$ is no-signalling if and only if it is possible to choose a pseudo-state $O$ and local POVMs such that (\ref{eq:oformalism}) is satisfied. This means that also a no-signalling $\vec{p}$ that is not quantum can be represented within the quantum formalism at the cost of dealing with ``non-positive states''. This opens up the possibility of looking at general correlations (even beyond quantum) in a novel, unified way that may shed light on the properties of quantum correlations themselves. We remark that the proof presented in \cite{acin} is constructive: for a given no-signalling $\vec{p}$, a recipe is given to find a specific choice of measurements and of $O$ such that (\ref{eq:oformalism}) is satisfied. In general, we write $O\rightarrow \vec{p}$ to indicate that there exist local measurements such that (\ref{eq:oformalism}) holds. In the case we want to specify which particular choice of $O$ and of local measurements $\{M_{a|x}\}$, $\{N_{b|y}\}$ leads to $\vec{p}$, we write $(O, \{M_{a|x}\}, \{N_{b|y}\})\rightarrow \vec{p}$.

We will denote by $\mathbb{O}$ the set of all bipartite normalized (i.e., of unit trace) Hermitian operators, i.e., pseudo-states. Obviously $\mathbb{S}\subsetneq\mathbb{D}\subsetneq\mathbb{O}$.

\section{Robustness measures of correlations}

In this paper we focus on the issue of quantification of correlations.  We will adopt a resource-theory view that aims %not only to quantify correlations per se---that is, correlations versus total lack of correlations---but also
to quantitatively differentiate between the various kinds of correlations that we have introduced: general no-signalling, quantum and local.

\subsection{Robustness}

Consider a real affine space $\mathcal{A}$ such that $(1-a) v_1+ a v_2\in \mathcal{A}$ for any $v_1,v_2\in\mathcal{A}$, and for any $a\in\mathbb{R}$. The robustness of an element $v\in\mathcal{A}$ with respect to a closed convex subset $S$ of $\mathcal{A}$ that spans the latter, can be generically defined as
\beq
\label{eq:robustness}
r_S(v):=\min\left\{t \left| t\geq0, \frac{v+t w}{1+t}\in S \textrm{ for some } w\in S \right\}\right.
\eeq
%Notice that, depending on the sets under consideration, it might happen that $r_S(v)=+\infty$; in the following we will only consider the case 
We can define $r_S(v)$ as a minimum because we suppose that  $S$ is convex and closed, and spans $\mathcal{A}$ (via affine combinations), so that  $r_S(v)<+\infty$ is achieved for all $v\in\mathcal{A}$. On the other hand, one has $r_S(v)=0$ if and only if $v\in S$. Further,
from the definition it follows that
%, if $r_S(v)<+\infty$
it is possible to express $v$ as an \emph{affine} combination $v = (1+r_S(v)) w^+ - r_S(v) w^-$ of two elements $w^\pm$ of $S$ (see Fig.~\ref{fig:robustness}).
\begin{figure}
\begin{center}
\includegraphics[scale=0.6]{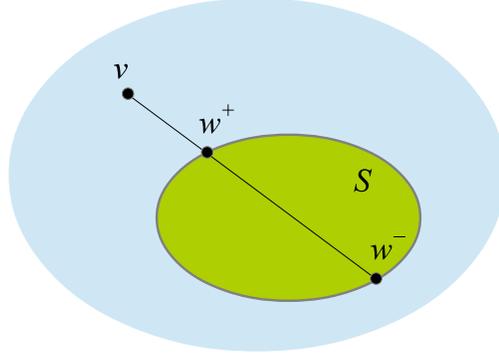}
\caption{Robustness of $v$ with respect to the set convex and closed set $S$; $v$ can be expressed as an affine combination of $w^\pm\in S$.}
\label{fig:robustness}
\end{center}
\end{figure}
%with $w^\pm\in S$.
Robustness expresses how much ``noise''---in terms of random mixture with an element of $S$, $w^-$---is enough to ``move'' $v$ into the set $S$, making it become $w^+$.

It is clear that in this sense, the robustness is a quantifier of ``how far'' from being part of $S$ the element $v$ is. In our setting, one can define several ``robustnesses'', both for non-signalling bipartite correlations and for (pseudo-)states:
\begin{eqnarray}
r_\mathcal{L}(\vec{p})&:=\min\left\{t \left|\; t\geq0, \frac{\vec{p}+t \vec{p}_\mathcal{L}}{1+t}\in \mathcal{L} \textrm{ for some } \vec{p}_\mathcal{L} \in \mathcal{L} \right\}\right.,\\
r_\mathcal{Q}(\vec{p})&:=\min\left\{t \left|\; t\geq0, \frac{\vec{p}+t \vec{p}_\mathcal{Q}}{1+t}\in \mathcal{Q} \textrm{ for some } \vec{p}_\mathcal{Q} \in \mathcal{Q} \right\}\right.,\\
\label{eq:entanglementrobustness}r_\mathbb{S}(O)&:=\min\left\{t \left|\; t\geq0, \frac{O+t \sigma_\mathbb{S}}{1+t}\in \mathbb{S} \textrm{ for some } \sigma_\mathbb{S} \in \mathbb{S} \right\}\right.,\\
\label{eq:pseudorobustness}r_\mathbb{D}(O)&:=\min\left\{t \left|\; t\geq0, \frac{O+t \rho_\mathbb{D}}{1+t}\in \mathbb{D} \textrm{ for some } \rho_\mathbb{D} \in \mathbb{D} \right\}\right..
\end{eqnarray}
We remark that we used the fact that local correlations span $\mathcal{NS}$~\cite{communicationcomplexity} and that separable states span the set of bipartite states~\cite{separabilityball}, and hence the set of pseudo-states.
The quantities $r_\mathcal{L}$ and $r_\mathcal{Q}$, respecting  $r_\mathcal{L}\geq r_\mathcal{Q}$, have already been considered in, for example,~\cite{communicationcomplexity}. The quantity $r_\mathbb{S}$ was originally defined in~\cite{robustness} for bipartite states, and is known as \emph{robustness} of entanglement: here we extend it to pseudo-states. Finally, $r_\mathbb{D}(O)$ has appeared previously in literature in the context of the study of negativity of entanglement~\cite{negativity}. Clearly, $r_\mathbb{S}\geq r_\mathbb{D}$.

\subsection{Robustness in  the unified operator formalism}

Given the two-way mapping of (\ref{eq:oformalism}) one can also introduce ``hybrid'', less trivial notions of robustness, not directly fitting in the framework of~(\ref{eq:robustness}). For example, for fixed $\vec{p}$ we can define
\beq
\label{eq:mixedrobustness}
%\begin{eqnarray}
r_\mathbb{S}(\vec{p}):=\inf_{O\rightarrow \vec{p}} r_\mathbb{S}(O),\qquad
r_\mathbb{D}(\vec{p}):=\inf_{O\rightarrow \vec{p}} r_\mathbb{D}(O),
\eeq
where the infima are taken over all $O$ such that $O\rightarrow \vec{p}$. Notice that we revert to considering infima because we are not constraining the underlying dimension of $O$, but we will see that the infima are actually minima. Also, notice that the quantities are well defined because of the construction of~\cite{acin}, which ensures that there is at least one $O$ such that  $O\rightarrow \vec{p}$. The two measures $r_\mathbb{S}(\vec{p})$ and $r_\mathbb{D}(\vec{p})$ quantify how far from separable and positive semidefinite, respectively, any pseudo-state $O$ needs  to be at least in order to give raise to $\vec{p}$ via some local measurements. On the other hand, for fixed $O$ we can define
\beq
r_\mathcal{L}(O):=\sup_{O\rightarrow \vec{p}}r_\mathcal{L}(\vec{p}),\qquad
r_\mathcal{Q}(O):=\sup_{O\rightarrow \vec{p}}r_\mathcal{Q}(\vec{p}),
\eeq
where the suprema are over all possible local measurements, with an arbitrary number of inputs and outputs---hence the use of the suprema, rather than maxima. These latter  two measures quantify how non-local (quantum) a correlation vector $\vec{p}$ generated via local measurements on $O$ can be.
Notice that if $O$ is not positive semidefinite or an entanglement witness~\cite{acin}, then there will necessarily be local measurements such that the resulting $\vec{p}$ is not a good probability vector, because some of the entries will be negative.  In such a case one can still calculate the robustness of the pseudo-probability distribution $\vec{p}$ with respect to $\mathcal{L}$ and $\mathcal{Q}$. Alternatively, one can potentially consider an additional restriction, imposing that only measurements leading to \emph{bona fide} probability distributions are allowed.

Suppose that $O$ is such that $O\rightarrow\vec{p}$. By definition of $r_\mathbb{S}(O)$, there exists a separable state $\sigma_\mathbb{S}$ such that $(O+ r_\mathbb{S}(O) \sigma_\mathbb{S})/(1+r_\mathbb{S}(O))$ is  separable. By locally measuring the latter state  with the same measurements that give the mapping from $O$ to $\vec{p}$, we  obtain a local probability distribution $(\vec{p}+ r_\mathbb{S}(O) \vec{p}_\mathcal{L})/(1+r_\mathbb{S}(O))$, with $ \vec{p}_\mathcal{L}\in \mathcal{L}$, because the mapping from operators to boxes---for fixed measurements---is linear, and local measurements performed on a separable state always give raise to a local probability distribution. Since this is true for any $O$ such that $O\rightarrow\vec{p}$, this proves that $r_\mathcal{L}(\vec{p})\leq \inf_{O\rightarrow \vec{p}} r_\mathbb{S}(O)=r_\mathbb{S}(\vec{p})$. One similarly proves that $r_\mathcal{Q}(\vec{p})\leq r_\mathbb{D}(\vec{p})$. We will now see that these are actually equalities. Before we proceed, we need a simple lemma.

\begin{lemma}\label{lem:Oadd}
Given two probability distributions $\vec{p}$ and $\vec{p'}$ such that $(O, \{M_{a|x}\}, \{N_{b|y}\}) \rightarrow \vec{p}$ and $(O', \{M'_{a|x}\}, \{N'_{b|y}\}) \rightarrow \vec{p'}$, then
\beq
\label{eq:convexprob}
(\tilde{O}, \{\tilde{M}_{a|x}\}, \{\tilde{N}_{b|y}\}) \rightarrow \vec{\tilde{p}} := (1-q)\vec{p} + q\vec{p'},
\eeq
with
\begin{eqnarray}
\nonumber{\tilde{O}}^{ABCD} &:= (1-q) O^{AB} \otimes \ket{0}\bra{0}^C\otimes\ket{0}\bra{0}^D + q O'^{AB} \otimes \ket{1}\bra{1}^C\otimes\ket{1}\bra{1}^D, \\
\label{eq:POVM_A}\tilde{M}^{AC}_{a|x} &:= M^{A}_{a|x} \otimes \ket{0}\bra{0}^{C} + M'^{A}_{a|x} \otimes \ket{1}\bra{1}^{C}, \\
\label{eq:POVM_B}\tilde{N}^{BD}_{b|x} &:= N^{B}_{b|y} \otimes \ket{0}\bra{0}^{D} + N'^{B}_{b|y} \otimes \ket{1}\bra{1}^{D},
\end{eqnarray}
for any $q \in \mathbb{R}$.
\end{lemma}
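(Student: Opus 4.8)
The plan is a direct verification that the triple $(\tilde O,\{\tilde M_{a|x}\},\{\tilde N_{b|y}\})$ meets the three conditions encoded in the notation ``$\rightarrow\vec{\tilde p}$'': (i) $\tilde O\in\mathbb O$, i.e.\ it is Hermitian with unit trace; (ii) for each $x$ the operators $\{\tilde M_{a|x}\}_a$ form a legitimate POVM, and likewise $\{\tilde N_{b|y}\}_b$ for each $y$; and (iii) inserting these into \eqref{eq:oformalism} returns exactly $\vec{\tilde p}$. The conceptual point is that the two extra ``flag'' registers $C$ (appended to Alice) and $D$ (appended to Bob) record, in the orthogonal states $\ket 0$ and $\ket 1$, which of the two given realizations is active; because $\langle 0|1\rangle=0$, local measurements of the prescribed form cannot mix the two branches, so the box simply outputs the corresponding affine combination of $\vec p$ and $\vec{p'}$.

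\textbf{Steps (i) and (ii).} For (i): $\tilde O$ is a real-linear combination of Hermitian operators ($O,O'$ and the flag projectors are Hermitian), hence Hermitian; and $\Tr(\tilde O)=(1-q)\Tr(O)\Tr(\ket0\bra0^C)\Tr(\ket0\bra0^D)+q\Tr(O')\Tr(\ket1\bra1^C)\Tr(\ket1\bra1^D)=(1-q)+q=1$, using $\Tr(O)=\Tr(O')=1$ and that the flag states are normalized. This holds for every real $q$, and no positivity is invoked, consistently with $\tilde O$ being a pseudo-state. For (ii): each $\tilde M^{AC}_{a|x}$ is a sum of tensor products of positive operators, hence positive, and $\sum_a\tilde M^{AC}_{a|x}=\big(\sum_aM^A_{a|x}\big)\otimes\ket0\bra0^C+\big(\sum_aM'^A_{a|x}\big)\otimes\ket1\bra1^C=\I_A\otimes\big(\ket0\bra0^C+\ket1\bra1^C\big)=\I_{AC}$, using the two POVM normalizations $\sum_aM^A_{a|x}=\sum_aM'^A_{a|x}=\I_A$; the argument for $\{\tilde N^{BD}_{b|y}\}_b$ is identical with $C\to D$.

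\textbf{Step (iii).} I would expand $\Tr\big(\tilde M^{AC}_{a|x}\otimes\tilde N^{BD}_{b|y}\,\tilde O^{ABCD}\big)$ into its four terms. In the $C$ register the factor $\ket0\bra0$ (resp.\ $\ket1\bra1$) appears in $\tilde M$ and, paired with the $O$ (resp.\ $O'$) branch, in $\tilde O$; the same holds in the $D$ register for $\tilde N$. Since $\ket0\bra0$ and $\ket1\bra1$ are orthogonal, every ``mismatched'' cross term has vanishing trace, and only the two aligned terms survive, giving $(1-q)\Tr(M^A_{a|x}\otimes N^B_{b|y}O^{AB})+q\,\Tr(M'^A_{a|x}\otimes N'^B_{b|y}O'^{AB})=(1-q)p(a,b|x,y)+q\,p'(a,b|x,y)=\tilde p(a,b|x,y)$, as claimed. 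There is no genuine obstacle in this lemma; the only thing that needs care is the bookkeeping that pins down which flag cross terms vanish by orthogonality. One remark worth adding at the end is that nothing above restricts $q$ to $[0,1]$: for $q\in[0,1]$ the output $\vec{\tilde p}$ is a bona fide probability distribution (and $\tilde O$ inherits positivity, or separability, of $O$ and $O'$), while for general real $q$ the conclusion is to be read at the level of the identity \eqref{eq:oformalism}, with $\vec{\tilde p}$ possibly a pseudo-probability distribution.
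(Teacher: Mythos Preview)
Your proof is correct and follows exactly the approach of the paper, which simply states ``by inspection'' and notes that one checks the $\tilde M$'s and $\tilde N$'s are valid POVMs and that \eqref{eq:convexprob} holds. You have merely spelled out those checks in full, including the normalization of $\tilde O$ and the vanishing of the flag cross terms by orthogonality.
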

\begin{proof}
By inspection: it is easy to check both that $\{\tilde{M}^{AC}_{a|x}\}$ and $\{\tilde{N}^{BD}_{b|x}\}$ are \emph{bona fide} POVMs (for fixed $x$ and $y$, respectively), and that (\ref{eq:convexprob}) is satisfied.
\end{proof}

\begin{proposition}
\label{prop:equalityrobustness}
It holds $r_\mathcal{L}(\vec{p})= r_\mathbb{S}(\vec{p})$ and $r_\mathcal{Q}(\vec{p})= r_\mathbb{D}(\vec{p})$.
\end{proposition}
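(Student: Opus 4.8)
The two inequalities $r_\mathcal{L}(\vec{p})\leq r_\mathbb{S}(\vec{p})$ and $r_\mathcal{Q}(\vec{p})\leq r_\mathbb{D}(\vec{p})$ have just been shown, so the plan is to establish the reverse inequalities $r_\mathbb{S}(\vec{p})\leq r_\mathcal{L}(\vec{p})$ and $r_\mathbb{D}(\vec{p})\leq r_\mathcal{Q}(\vec{p})$. I will spell out the local case; the quantum case follows by the same argument with $\mathcal{Q},\mathbb{D}$ in place of $\mathcal{L},\mathbb{S}$, using that every quantum $\vec{p}$ is realized by a density matrix just as every local $\vec{p}$ is realized by a separable state.

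Write $t:=r_\mathcal{L}(\vec{p})$. Because $\mathcal{L}$ is convex and closed, the minimum in the definition of $r_\mathcal{L}$ is attained, so there are $\vec{p}^{-}\in\mathcal{L}$ and $\vec{p}^{+}:=(\vec{p}+t\,\vec{p}^{-})/(1+t)\in\mathcal{L}$; equivalently $\vec{p}=(1+t)\vec{p}^{+}-t\,\vec{p}^{-}$ is an affine combination of two local boxes. Each of $\vec{p}^{\pm}$ can be produced by local measurements on a separable state (a fact recalled in the Preliminaries), say $(\sigma^{+},\{M_{a|x}\},\{N_{b|y}\})\rightarrow\vec{p}^{+}$ and $(\sigma^{-},\{M'_{a|x}\},\{N'_{b|y}\})\rightarrow\vec{p}^{-}$ with $\sigma^{\pm}\in\mathbb{S}$.

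Now apply Lemma~\ref{lem:Oadd} with the real weight $q=-t$. The essential point is that the lemma allows an arbitrary real $q$, so an affine decomposition at the level of boxes lifts to the level of operators: it produces the unit-trace Hermitian operator
\[
\tilde{O}=(1+t)\,\sigma^{+}\otimes\ket{0}\bra{0}^{C}\otimes\ket{0}\bra{0}^{D}-t\,\sigma^{-}\otimes\ket{1}\bra{1}^{C}\otimes\ket{1}\bra{1}^{D}
\]
together with combined POVMs on $AC$ and $BD$ such that $\tilde{O}\rightarrow(1+t)\vec{p}^{+}-t\,\vec{p}^{-}=\vec{p}$. The operator $w:=\sigma^{-}\otimes\ket{1}\bra{1}^{C}\otimes\ket{1}\bra{1}^{D}$ is separable with respect to the cut $(AC)|(BD)$, and
\[
\frac{\tilde{O}+t\,w}{1+t}=\sigma^{+}\otimes\ket{0}\bra{0}^{C}\otimes\ket{0}\bra{0}^{D}
\]
is separable as well; hence $r_\mathbb{S}(\tilde{O})\leq t$. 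Since $\tilde{O}\rightarrow\vec{p}$, this gives $r_\mathbb{S}(\vec{p})=\inf_{O\rightarrow\vec{p}}r_\mathbb{S}(O)\leq r_\mathbb{S}(\tilde{O})\leq t=r_\mathcal{L}(\vec{p})$, and therefore $r_\mathcal{L}(\vec{p})=r_\mathbb{S}(\vec{p})$; the same construction shows the infimum defining $r_\mathbb{S}(\vec{p})$ is attained, i.e.\ is a minimum. Running the identical steps with $\mathcal{Q},\mathbb{D}$ and density matrices $\rho^{\pm}$ gives $r_\mathcal{Q}(\vec{p})=r_\mathbb{D}(\vec{p})$.

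I expect the only genuinely delicate point to be the attainment of the optimum at the probability level---so that a true affine identity $\vec{p}=(1+t)\vec{p}^{+}-t\,\vec{p}^{-}$ is available rather than a limiting statement---which is supplied by closedness and convexity of $\mathcal{L}$ and $\mathcal{Q}$. Everything else is the bookkeeping of Lemma~\ref{lem:Oadd} (keeping track of the sign of $q$ and of which ancilla, $C$ or $D$, is grouped with $A$ or $B$) and the observation that re-adding the negative block of $\tilde{O}$ as noise costs exactly the weight $t$.
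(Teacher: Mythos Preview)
Your proof is correct and follows essentially the same approach as the paper: write $\vec{p}$ as the affine combination $(1+t)\vec{p}^{+}-t\,\vec{p}^{-}$ of two local boxes, realize each by a separable state, and apply Lemma~\ref{lem:Oadd} with $q=-t$ to produce a pseudo-state $\tilde{O}$ whose two blocks are separable across $AC{:}BD$, giving $r_\mathbb{S}(\tilde{O})\leq t$. Your write-up is in fact slightly more explicit than the paper's, spelling out which separable noise $w$ witnesses $r_\mathbb{S}(\tilde{O})\leq t$ and noting that attainment of the optimum in $r_\mathcal{L}$ (and $r_\mathcal{Q}$) is what makes the affine identity available.
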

\begin{proof}
We already argued that $r_\mathcal{L}(\vec{p})\leq r_\mathbb{S}(\vec{p})$ and $r_\mathcal{Q}(\vec{p})\leq r_\mathbb{D}(\vec{p})$. %We will now prove inequalities in the other direction.
We will now explicitly prove $r_\mathcal{L}(\vec{p})\geq r_\mathbb{S}(\vec{p})$; $r_\mathcal{Q}(\vec{p})\geq r_\mathbb{D}(\vec{p})$ can be  proven along the same lines.

By definition of $r_\mathcal{L}(\vec{p})$, it holds that there are  local $\vec{p}^\pm_\mathcal{L}$ such that
 $\vec{p}=(1+r_\mathcal{L}(\vec{p})\vec{p}^+_\mathcal{L} - r_\mathcal{L}(\vec{p})\vec{p}^-_\mathcal{L}$. Consider now \emph{separable} $O^{\pm}$ and POVMs $\{M^{\pm}_{a|x}\}, \{N^{\pm}_{b|y}\}$ such that $(O^\pm, \{M^\pm_{a|x}\}, \{N^\pm_{b|y}\})\rightarrow\vec{p}^\pm_\mathcal{L}$, which are known to exist~\cite{acin}.
 Then we can follow the construction of Lemma~\ref{lem:Oadd} to find an $\tilde{O}=(1+r_\mathcal{L}(\vec{p})) O^{+,AB} \otimes \ket{0}\bra{0}^C\otimes\ket{0}\bra{0}^D - r_\mathcal{L}(\vec{p})O^{-,AB} \otimes \ket{1}\bra{1}^C\otimes\ket{1}\bra{1}^D$ such that $\tilde{O}\rightarrow\vec{p}$. Since $O^{+,AB} \otimes \ket{0}\bra{0}^C\otimes\ket{0}\bra{0}^D$ and $O^{-,AB} \otimes \ket{1}\bra{1}^C\otimes\ket{1}\bra{1}^D$  are both separable in the $AC:BD$ cut, we see that  $r_\mathbb{S}(\vec{p})=\inf_{O\rightarrow \vec{p}} r_\mathbb{S}(O)\leq r_\mathcal{L}(\vec{p})$.
% construct a pseudo-state $\tilde{O}= O^{AB} \otimes \ket{0}\bra{0}^C\otimes\ket{0}\bra{0}^D + r_\mathcal{L}(\vec{p})/(1+r_\mathcal{L}(\vec{p})) O'^{AB} \otimes \ket{1}\bra{1}^C\otimes\ket{1}\bra{1}^D$ and POVMs as in (\ref{eq:POVM_A}) and (\ref{eq:POVM_B}).
% and Following the construction of~\cite{acin} we can find a corresponding $O$
\end{proof}

The latter result means that the infima in \eqref{eq:mixedrobustness} are achieved, and can be taken to be minima. It is also worth remarking that, in the construction of $\tilde{O}$ in the proof of Proposition~\ref{prop:equalityrobustness}, it is not assured that $\tilde{O}$ is a quantum state, even in the case where $\vec{p}$ is quantum. Indeed,  in our definition of $r_\mathbb{S}(\vec{p})$ we do not presuppose anything about $O$; in particular, we do not assume that, in the case of a quantum $\vec{p}$, an $O$ achieving the optimal value can always be chosen to be a quantum state. Nonetheless, we expect  this to be the case, which leads us to formulate the following conjecture.
\begin{conjecture}
If $\vec{p}\in\mathcal{Q}$, then there exists $\rho\in\mathbb{D}$ such that $\rho\rightarrow\vec{p}$ and $r_\mathbb{S}(\rho)=r_\mathcal{L}(\vec{p})$.
\end{conjecture}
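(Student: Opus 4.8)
The plan is to reduce the claim to the construction of a single suitable state, to attempt that construction via the flag trick of Lemma~\ref{lem:Oadd}, and to locate the obstruction; I expect, however, that the obstruction is fatal and that the statement as phrased is \emph{false}, so I will also indicate what weaker version should survive. Set $r:=r_\mathcal{L}(\vec p)$. For any $\rho\in\mathbb D$ with $\rho\rightarrow\vec p$ we have $r_\mathbb{S}(\rho)\geq\inf_{O\rightarrow\vec p}r_\mathbb{S}(O)=r_\mathbb{S}(\vec p)=r$ by Proposition~\ref{prop:equalityrobustness}, so it suffices to produce one \emph{state} $\rho\rightarrow\vec p$ with $r_\mathbb{S}(\rho)\leq r$; equivalently, a state $\rho$, separable $\sigma,\tau\in\mathbb S$ and local measurements with $(\rho,\{M_{a|x}\},\{N_{b|y}\})\rightarrow\vec p$ and $\rho+r\tau=(1+r)\sigma$ (so that $\sigma$ realizes the local box $\vec p^{\,+}_\mathcal{L}$ and $\tau$ realizes $\vec p^{\,-}_\mathcal{L}$ for the same measurements). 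Note that the companion statement for $r_\mathbb{D}$ and $r_\mathcal{Q}$ is vacuous --- $r_\mathcal{Q}(\vec p)=0$ for $\vec p\in\mathcal{Q}$ and $r_\mathbb{D}(\rho)=0$ for every $\rho\in\mathbb D$, so any quantum realization works --- and all the content sits in the present $r_\mathbb{S}$--$r_\mathcal{L}$ pairing.

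The obvious attempt is to mimic the proof of Proposition~\ref{prop:equalityrobustness}: take a quantum realization $\rho_0\rightarrow\vec p$, take separable realizations $\sigma^{\pm}\rightarrow\vec p^{\,\pm}_\mathcal{L}$ (which exist, every local box having a separable realization), and glue. But gluing via Lemma~\ref{lem:Oadd} produces exactly the pseudo-state $\tilde O=(1+r)\,O^{+}\otimes\ket{0}\bra{0}^{C}\otimes\ket{0}\bra{0}^{D}-r\,O^{-}\otimes\ket{1}\bra{1}^{C}\otimes\ket{1}\bra{1}^{D}$, whose $\ket{11}$-block is $-rO^{-}\leq 0$: it is not positive semidefinite, and since the glued POVMs are flag-diagonal, no statistics-preserving ``off-flag'' correction can repair a strictly negative diagonal block. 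The variant that fixes the ``response-function'' POVMs $M_{a|x}=\sum_{f:f(x)=a}\ket f\bra f$ once and for all fails identically: with those measurements $\vec p$ is reproduced only by an operator diagonal in the response-function basis whose diagonal has some strictly negative entries --- negative precisely because $\vec p\notin\mathcal{L}$ --- and any operator with that pinned diagonal is non-positive. Hence a proof is forced to use genuinely non-commuting measurements, and the clean affine bookkeeping behind Proposition~\ref{prop:equalityrobustness} is lost.

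The real obstacle --- and the reason I expect the statement to fail --- is rigidity of extremal quantum correlations. Let $\vec p$ be the correlations achieving the Tsirelson value $2\sqrt2$ for CHSH, with vanishing marginals. Directly, $r_\mathcal{L}(\vec p)=\frac{\sqrt2-1}{2}\approx0.207$ (mix with the local box of CHSH value $-2$ and solve $(2\sqrt2-2t)/(1+t)=2$). But this $\vec p$ \emph{self-tests} the two-qubit maximally entangled state: applying standard CHSH self-testing to a purification of any $\rho$ with $\rho\rightarrow\vec p$ (treating the purifying register as a spectator on one side) shows $\rho$ is related by a local isometry to $\ket{\Phi^{+}}\bra{\Phi^{+}}\otimes\tau$ for some state $\tau$. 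Since $r_\mathbb{S}$ is invariant under local isometries and non-increasing under partial trace (it is an LOCC monotone~\cite{robustness}), $r_\mathbb{S}(\rho)=r_\mathbb{S}(\ket{\Phi^{+}}\bra{\Phi^{+}}\otimes\tau)\geq r_\mathbb{S}(\ket{\Phi^{+}}\bra{\Phi^{+}})=1$. So \emph{every} quantum-state realization of this $\vec p$ has entanglement robustness at least $1>r_\mathcal{L}(\vec p)$, which rules out the $\rho$ the conjecture asks for. Restricting to the interior of $\mathcal{Q}$ does not rescue it either: for $\vec p_\epsilon=(1-\epsilon)\vec p+\epsilon\,\vec p_{\mathrm{unif}}\in\mathrm{int}(\mathcal{Q})$ one still has $r_\mathcal{L}(\vec p_\epsilon)\to\frac{\sqrt2-1}{2}$, while a robust (stability) form of self-testing, via the witness $\mathbbm{1}-2\ket{\Phi^{+}}\bra{\Phi^{+}}$ (and its local-isometric images), forces $\inf_{\rho\rightarrow\vec p_\epsilon}r_\mathbb{S}(\rho)\to1$ as $\epsilon\to0$.

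Consequently I do not expect a state-level version of the equality to hold; the morally correct statement is Proposition~\ref{prop:equalityrobustness} itself, with the optimizer allowed to be an operator $O\in\mathbb O$ rather than a genuine state. If one wants something at the level of states, the only plausible direction I see is asymptotic --- asking whether a suitably regularized robustness, or a many-copy realization of a box arbitrarily close to $\vec p$, recovers $r_\mathcal{L}(\vec p)$ --- and settling even that would hinge on quantitative, robust self-testing near the boundary of $\mathcal{Q}$, which is where the genuine difficulty lies.
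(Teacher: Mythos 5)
You should first note that the paper does not prove this statement at all: it is explicitly left as a conjecture (``we expect this to be the case, which leads us to formulate the following conjecture''), so there is no proof of record to compare against. What you have written is not a proof but a disproof, and as far as I can tell your counterexample is essentially sound and is genuinely valuable content. Your reduction is right: by Proposition~\ref{prop:equalityrobustness}, $r_\mathbb{S}(\rho)\geq r_\mathbb{S}(\vec p)=r_\mathcal{L}(\vec p)$ for \emph{every} $\rho\rightarrow\vec p$, so the conjecture is exactly the claim that the infimum in \eqref{eq:mixedrobustness} is attained on $\mathbb{D}$ when $\vec p\in\mathcal{Q}$; and your diagnosis that the flag construction of Lemma~\ref{lem:Oadd} necessarily produces a non-positive $\tilde O$ whenever $r_\mathcal{L}(\vec p)>0$ correctly identifies why the paper's own technique cannot settle the question. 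The Tsirelson-point example then does the damage: $r_\mathcal{L}(\vec p)=(\sqrt2-1)/2$, while CHSH self-testing forces $r_\mathbb{S}(\rho)\geq r_\mathbb{S}(\ket{\Phi^+}\bra{\Phi^+})=1$ for every quantum realization, using the Vidal--Tarrach value $(\sum_i a_i)^2-1$ for pure states and LOCC monotonicity of $r_\mathbb{S}$.

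Two steps deserve more care before you claim this as a theorem. First, the value $r_\mathcal{L}(\vec p)=(\sqrt2-1)/2$: solving $(2\sqrt2-2t)/(1+t)=2$ only shows the mixture reaches the CHSH facet; you must also check the mixed box actually lies in $\mathcal{L}$. In the $2\times2\times2$ setting Fine's theorem reduces this to verifying all eight CHSH inequalities, which does go through for a mixture with a deterministic box of CHSH value $-2$, but say so. Second, the mixed-state self-testing step: you purify and treat the purifying register as a spectator, but you then need the extraction isometry on Bob's side to act on the \emph{measured} register only (otherwise the $\Phi^+$ could end up straddling $B$ and the purification), which the standard swap-isometry construction does guarantee; this should be stated explicitly, with a reference. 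The robust/interior version in your third paragraph is asserted rather than proved and is not needed for the counterexample, so I would either drop it or clearly label it as a further expectation. With those repairs, your conclusion stands: the conjecture as stated fails, and the correct statement is Proposition~\ref{prop:equalityrobustness} with the optimizer ranging over $\mathbb{O}$.
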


\subsection{Device-independent bounds on entanglement}

An approach to the quantification of the content of non-local correlations of a quantum state similar in spirit to $r_\mathcal{L}(O)$ was initiated in~\cite{EPR2} and later developed by many researchers (see~\cite{gisinnonlocalcontent} and references therein). Said approach is based on a decomposition of $\vec{p}$ different than the one resulting from the evaluation of the robustness $r_\mathcal{L}(\vec{p})$. More in detail, consider a convex decomposition
\beq
\label{eq:bestlocalapprox}
%\vec{p}=(1-q^\textrm{min}_{\textrm{NL}})\vec{p}_\cL+q^\textrm{min}_{\textrm{NL}}\vec{p}_{\mathcal{NS}}.
\vec{p}=(1-q_{\textrm{NL}})\vec{p}_\cL+q_{\textrm{NL}}\vec{p}_{\mathcal{NS}}.
\eeq
%Here $0\leq q^\textrm{min}_{\textrm{NL}} \leq 1$ is  the minimal non-local content of $\vec{p}$, with $\vec{p}_\cL\in\mathcal{L}$ and $\vec{p}_\mathcal{NS}\in \mathcal{NS}$. One could call this the \emph{best local approximation} to $\vec{p}$, similar to the best separable approximation for quantum states~\cite{bestseparableapprox}:
Here $0\leq q_{\textrm{NL}} \leq 1$ is the \emph{non-local weight} in the decomposition (\ref{eq:bestlocalapprox}), with $\vec{p}_\cL\in\mathcal{L}$ and $\vec{p}_\mathcal{NS}\in \mathcal{NS}$. One can further define $q^\textrm{min}_{\textrm{NL}}(\vec{p})$ as the minimum non-local weight over all possible decompositions (\ref{eq:bestlocalapprox}). The decomposition corresponding to the latter choice is the \emph{best local approximation} to $\vec{p}$. The approach of~\cite{EPR2} is then that of quantifying the non-local content of a quantum state $\rho$ as the maximum $q^\textrm{min}_{\textrm{NL}}(\vec{p})$ over all correlations that can be obtained from $\rho$ by local measurements.

We remark that the best local approximation has a correspondent at the level of operators in the \emph{best separable approximation} for quantum states~\cite{bestseparableapprox}. For the latter, consider decompositions
\beq
\label{eq:bestsepapprox}
\rho=(1-q_\textrm{E})\sigma_{\mathbb{S}} + q_\textrm{E} \rho_\textrm{E},
\eeq
with $0\leq q_\textrm{E}\leq 1$, $\sigma_{\mathbb{S}}\in \mathbb{S}$ and $\rho_\textrm{E}\in\mathbb{D}$. The best separable approximation corresponds to the case where $q_\textrm{E}$ is minimized, giving raise to a parameter $q^\textrm{min}_\textrm{E}(\rho)$.

A key aspect of both decompositions (\ref{eq:bestlocalapprox}) and (\ref{eq:bestsepapprox}) is that the second term on the right-hand side is constrained to be positive, i.e., a \emph{bona fide} probability distribution and a positive semidefinite operator, respectively. This prevents us from choosing $q_{\textrm{NL}}$ and $q_\textrm{E}$ arbitrarily small. One then realizes that an approach similar to \eqref{eq:bestsepapprox} does not have an immediate correspondent for pseudo-states. That is, when we write, for example, $O=(1-q)\sigma_{\mathbb{S}} + q O'$, with $\sigma_{\mathbb{S}}\in \mathbb{S}$ and $O'$ another pseudo-state, we can always choose $q$ arbitrary small, by a suitable choice of $O'$. Notice that we would encounter the same problem if the ``noise'' in \eqref{eq:entanglementrobustness} and in \eqref{eq:pseudorobustness} were allowed to be arbitrary, i.e., a pseudo-state, rather than separable or positive-semidefinite, respectively. Nonetheless it is possible to define consistently a \emph{generalized robustness of entanglement}, where one considers the minimum mixing with any arbitrary---i.e., potentially entangled---state $\rho_\mathbb{D}$ so that the resulting mixed operator is separable:
%The approach of~\cite{EPR2} was 
\beq
\label{eq:genralizeentanglementrobustness}r^\textrm{G}_\mathbb{S}(O):=\min\left\{t \left|\; t\geq0, \frac{O+t \rho_\mathbb{D}}{1+t}\in \mathbb{S} \textrm{ for some } \rho_\mathbb{D} \in \mathbb{D} \right\}\right..\\
\eeq
This quantity was introduced and studied in~\cite{generalizedrobustnesssteiner,generalizedrobustnessharrow} for quantum states, but one can consider it for pseudo-states as well, as we do. In the case of states, $r^\textrm{G}_\mathbb{S}$ is an entanglement measure with operational meaning~\cite{brandaodatta,brandaorobustness}.  Similarly, one can consider a \emph{generalized locality robustness} for correlations:
\[
\label{eq:genralizelocalityrobustness}
r^\textrm{G}_\mathcal{L}(\vec{p}):=\min\left\{t \left|\; t\geq0, \frac{\vec{p}+t \vec{p}_\mathcal{NL}}{1+t}\in \mathcal{L} \textrm{ for some } \vec{p}_\mathcal{NL} \in \mathcal{NL} \right\}\right..
\]
Clearly, $r^\textrm{G}_\mathbb{S}(O)\leq r_\mathbb{S}(O)$ and $r^\textrm{G}_\mathcal{L}(\vec{p})\leq r_\mathcal{L}(\vec{p})$.

%\sup_{\{M_{a|x}\},\{N_{b|y}\}} \\
%\left\{t \left|\; t\geq0, \frac{\vec{p}+t \vec{p}_\mathcal{L}}{1+t}\in \mathcal{L} \textrm{ for some } \vec{p}_\mathcal{L} \in \mathcal{L} \right\}\right.,\\
%r_\mathcal{Q}(\vec{p})&:=\inf\left\{t \left|\; t\geq0, \frac{\vec{p}+t \vec{p}_\mathcal{Q}}{1+t}\in \mathcal{Q} \textrm{ for some } \vec{p}_\mathcal{Q} \in \mathcal{Q} \right\}\right.,\\
%r_\mathbb{S}(O)&:=\inf\left\{t \left|\; t\geq0, \frac{O+t \sigma_\mathbb{S}}{1+t}\in \mathbb{S} \textrm{ for some } \sigma_\mathbb{S} \in \mathbb{S} \right\}\right.,\\
%r_\mathbb{D}(O)&:=\inf\left\{t \left|\; t\geq0, \frac{O+t \rho_\mathbb{D}}{1+t}\in \mathbb{D} \textrm{ for some } \rho_\mathbb{D} \in \mathbb{D} \right\}\right..
%\end{eqnarray}

Suppose that we are interested in assessing the entanglement of a distributed state $\rho=\rho_{AB}$ that is locally measured and leads to the establishment of correlations $\vec{p}$. We notice that the inequality $r^\textrm{G}_\mathcal{L}(\vec{p})\leq r^\textrm{G}_\mathbb{S}(\rho)$ ($\leq r_\mathbb{S}(\rho)$) can be proved along similar lines as $r_\mathcal{L}(\vec{p})\leq r_\mathbb{S}(\rho)$ (see paragraph before Lemma~\ref{lem:Oadd}~\footnote{Equality cannot be proven as in Proposition~\ref{prop:equalityrobustness}, because of the constraint on the noise used in the definition \eqref{eq:genralizeentanglementrobustness} of  $r^\textrm{G}_\mathbb{S}$.}), and can be used to lower bound the robustness and generalized robustness of the (potentially unknown) underlying $\rho$. This lower bound does not depend on the details of the local measurements. Thus, it constitutes a \emph{device-independent} bound on the entanglement of the underlying state. Similarly, the best local approximation of $\vec{p}$ and the best separable approximation of $\rho$ satisfy $q^\textrm{min}_{\textrm{NL}}(\vec{p})\leq  q^\textrm{min}_\textrm{E}(\rho)$. The latter fact is more or less explicitly remarked in, e.g.,~\cite{gisinnonlocalcontent}.

While a detailed study of this kind of device-independent bounds on entanglement will be reported elsewhere~\cite{pianiinprep}, we remark here that the calculation of $r^\textrm{G}_\mathcal{L}(\vec{p})$ corresponds to the solution of a linear-programming problem, hence it is simple and straightforward, at least numerically. It is also as faithful as possible for a device-independent bound: any measured non-local $\vec{p}$ allows us to find a non-trivial lower bound for the operationally meaningful $r^\textrm{G}_\mathbb{S}(\rho)$.

\subsection{Local processing of correlations with side non-resources}
\label{sec:localprocessing}

The robustness quantifiers of correlations, as well as any other possible quantifier of correlations, are interesting and potentially useful, but their meaningfulness is not immediately apparent.  The point is that they should at least behave meaningfully in an operational framework. This is the case, for example, of entanglement measures~\cite{entanglementreview}. Our approach here is to define the notion of local processing of a box with side ``non-resources''. Analyzing  such processing is made simpler by working within the unified framework for correlations summarized by (\ref{eq:oformalism}).

As represented in Figure~\ref{fig:manipulation_box}, a new box $\vec{p}'$ (with inputs labelled $x',y'$ and outputs labelled $a',b'$) can be ``built'' out of the box $\vec{p}$, processing the latter with the help of a shared (pseudo-)state $O_{\textrm{NR}}$ and local measurements. Here the subscript $\textrm{NR}$ stands for the fact that $O_{\textrm{NR}}$ is a \emph{non-resource}, meaning that via local measurements on it alone---that is, via (\ref{eq:oformalism})---it would give raise to non-resource boxes with respect to which $\vec{p}$ is compared. For example, to study the non-locality of $\vec{p}$, i.e., to compare it to local boxes, $O_{\textrm{NR}}$ is chosen to be a separable state. Similarly, to discuss how beyond quantum $\vec{p}$ is, $O_{\textrm{NR}}$ is chosen to be a \emph{bona fide} quantum state, i.e., positive semidefinite.
\begin{figure}
\begin{center}
\includegraphics[width=0.7\textwidth]{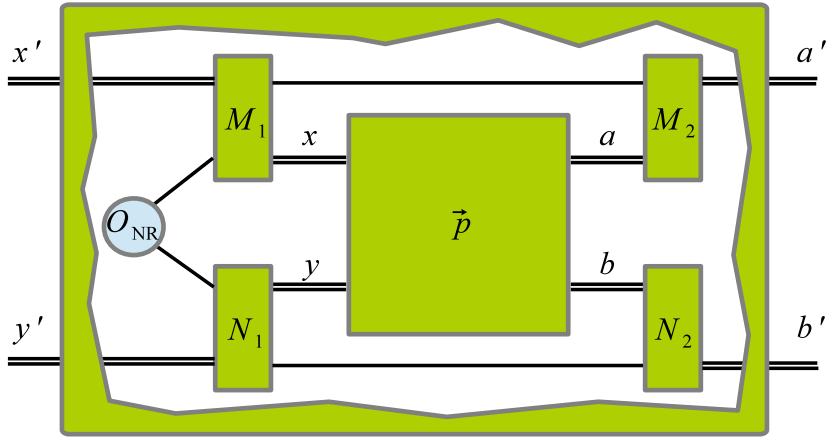}
\caption{Local processing of a box $\vec{p}$ with inputs $(x,y)$ and outputs $(a,b)$. Besides local measurements, we allow the use of any side non-resource embodied by a shared $O_\textrm{NR}$. The result is a new box $\vec{p}'$ with with inputs $(x',y')$ and outputs $(a',b')$. Classical information flows from left to right along double lines; quantum information flows along single lines.}
\label{fig:manipulation_box}
\end{center}
\end{figure}
Notice that two measurements per party are allowed: a pre-processing one (e.g., $M_1$ for Alice) before providing the input for the box $\vec{p}$, and a post-processing one (e.g., $M_2$ for Alice) using the output of $\vec{p}$. Most importantly, $M_1$ has both classical and quantum input/output, and $M_2$, albeit having only classical output, has both classical and quantum input. Information between the first measurement and the second measurement ``travels'' not only through $\vec{p}$ (which likely modifies it), but also along a quantum wire, which can accommodate both quantum and classical information. So, for example, information about Alice's classical input $x$ to $\vec{p}$, i.e., the classical output of measurement $M_1$, can be imagined to be available also at the time of the second measurement $M_2$. The quantum wires allow us to consider any non-resource arising---via local measurements---from a non-resource (pseudo-)state as being present from the very beginning, and incorporated in $O_{\textrm{NR}}$. This is the first simplification in the analysis due to the two-way relation $(\ref{eq:oformalism})$.

A second simplification becomes apparent when the box $\vec{p}$ is ``opened'' (see Figure~\ref{fig:manipulation_open_box}).
\begin{figure}
\begin{center}
\includegraphics[width=0.7\textwidth]{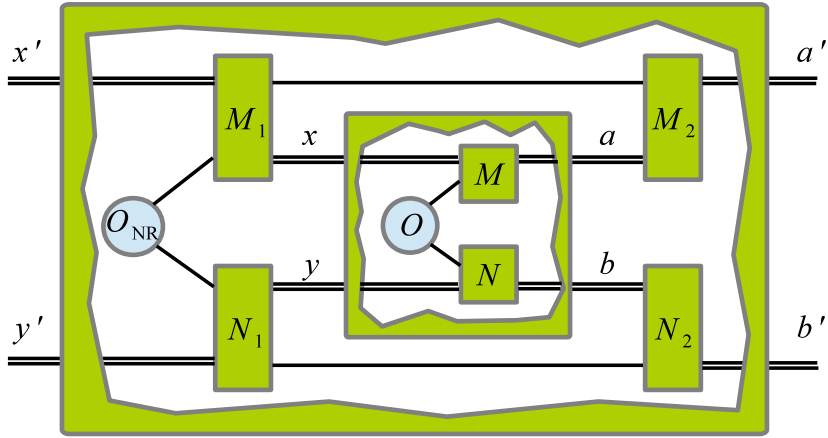}
\caption{Analysis of the local processing of a box $\vec{p}$ with side non-resources taking full advantage of the $O$-formalism: $\vec{p}$ itself arises from local quantum measurements, even if not quantum.}
\label{fig:manipulation_open_box}
\end{center}
\end{figure}
One realizes that each local processing---on Alice's side and, independently and similarly, on Bob's side---of ``wired'' (both classically and quantumly) measurements on $O_{\textrm{NR}}$ and $O$ corresponds to a two-way LOCC (local operations and classical communication) measurement on $O_{\textrm{NR}}\otimes O$. More in detail, let $O=O^{AB}$ and $O_{\textrm{NR}}=O_{\textrm{NR}}^{CD}$, with $AC$ on Alice's side, and $BD$ on Bob's side. Then one has
\begin{eqnarray}
p'(a',b'|x'y') &=\sum_{a,b,x,y} q(a',b',x,y|a,b,x',y')p(a,b|x,y)\\
&=
 \sum_{a,b,x,y} \Tr_{CD}
\Bigg(M^C_{2,a'|a}\otimes N^D_{2,b'|b} \left(\Lambda^C_{x|x'}\otimes \Gamma^D_{y|y'} \left[O^{CD}_{\textrm{NR}}\right]\right)\Bigg)\\
&\qquad\quad\times\Tr_{AB}\left(M^A_{a|x}\otimes N^B_{b|y} O^{AB}\right)\\
%
%&= 
%\Tr
%\Bigg( \sum_{a,b,x,y}
%\left(M^C_{2,a'|a}\otimes N^D_{2,b'|b} \left(\Lambda^C_{x|x'}\otimes \Gamma^D_{y|y'} \left[O^{CD}_{\textrm{NR}}\right]\right)\right)\\
%&\qquad\qquad\otimes\left(M^A_{a|x}\otimes N^B_{b|y} O^{AB}\right)\Bigg)\\
&=\Tr\Bigg(
\bigg(
	\sum_{a,x} \Lambda^{C\dagger}_{x|x'}[M^C_{2,a'|a}]\otimes M^A_{a|x}
\bigg)\\
&\qquad\otimes	
\bigg(
	\sum_{b,y} \Gamma^{D\dagger}_{y|y'}[N^D_{2,b'|b}]\otimes N^A_{b|y}
\bigg)\,\,
O^{CD}_{\textrm{NR}}\otimes O^{AB}
\Bigg)\\
&=\Tr\big(M'^{CA}_{a'|x'}\otimes N'^{DB}_{b'|y'}\,O^{CD}_{\textrm{NR}}\otimes O^{AB}\big)
\end{eqnarray}
Here the completely-positive trace-non-increasing maps $\Lambda^C_{x|x'}$ (such that $\sum_{x}\Lambda^C_{x|x'}$ is trace-preserving for all $x'$) correspond to the action of measurement $M_1$ (similarly for the $\Gamma$'s and $N_1$). By using their duals $\Lambda^{C\dagger}_{x|x'}$ we defined new local POVM elements $M'^{CA}_{a'|x'}=\sum_{a,x} \Lambda^{C\dagger}_{x|x'}[M^C_{2,a'|a}]\otimes M^A_{a|x}$ that can be realized via two-way LOCC between $A$ and $C$ (similarly on Bob's side). One concludes that a box $\vec{p}$ that can be realized via local measurements on some $O$ gets mapped onto a box $\vec{p'}$ that can be realized via local measurements on $O\otimes O_{\textrm{NR}}$.

Consider then the case where one quantifies a property of boxes in terms of the underlying (pseudo-)states. Suppose one proves that such a quantifier behaves well---in particular, that it does not increase---under tensoring the (pseudo-)state with a non-resource (pseudo-)state; then it is proven that the quantifier behaves meaningfully under local processes with side non-resources.

We are going to show that this is the case for both $r_\mathcal{L}(\vec{p})= r_\mathbb{S}(\vec{p})$ and $r_\mathcal{Q}(\vec{p})= r_\mathbb{D}(\vec{p})$ (remember that these equalities were proven in Proposition~\ref{prop:equalityrobustness}). By making use of the unified framework for correlations we will now explicitly argue that $r_\mathcal{L}(\vec{p})$ behaves well under local processing with side non-resources. The same kind of argument goes through for $r_\mathcal{Q}(\vec{p})$.

Consider an optimal $O$ such that $O\rightarrow\vec{p}$ and $r_\mathbb{S}(\vec{p})=r_\mathbb{S}(O)$. Consider further a separable $\sigma_\mathbb{S}$ such that $(O+r_\mathbb{S}(O)\sigma_\mathbb{S})/(1+r_\mathbb{S}(O))$ is separable, which exists by definition of $r_\mathbb{S}(O)$. A box $\vec{p'}$ that we can obtain with local processing from $\vec{p}$ using as side non-resource a separable state $\tau_\mathbb{S}$, can be obtained by measuring locally $O\otimes \tau_\mathbb{S}$. We have that
\[
\frac{O\otimes \tau_\mathbb{S} + r_\mathbb{S}(O) \sigma_\mathbb{S}\otimes \tau_\mathbb{S}}{1+r_\mathbb{S}(O)}=\frac{O + r_\mathbb{S}(O) \sigma_\mathbb{S}}{1+r_\mathbb{S}(O)}\otimes \tau_\mathbb{S}
\]
is separable, because the tensor product of two separable states is separable. We observe that there might be another $\rho_\mathbb{S}$ such that such that $(O\otimes \tau_\mathbb{S} + t \rho_\mathbb{S} )/(1+t)$ is separable for $0\leq t\leq r_\mathbb{S}(O)$. Thus, we have that $r_\mathbb{S}(O\otimes \tau_\mathbb{S})\leq r_\mathbb{S}(O)=r_\mathbb{S}(\vec{p})$. Since $O\otimes \tau_\mathbb{S}\rightarrow \vec{p'}$, this implies $r_\mathbb{S}(\vec{p'})\leq r_\mathbb{S}(\vec{p})$. Thus, local processing with side non-resources does not increase $r_\mathcal{L}(\vec{p})= r_\mathbb{S}(\vec{p})$.

\subsection{Alternative characterizations of robustness}

We provide here two alternative characterizations of robustness that will apply to our case.
\begin{proposition}
The robustness (\ref{eq:robustness}) has  the two alternative characterizations
\begin{eqnarray}
\label{eq:robustsum}r_S(v)&=\frac{1}{2}\left(\min\left\{\sum_i |c_i|\left| v = \sum_i c_i w_i, w_i\in S, \sum_i c_i =1\right\}\right.-1\right)\\
\label{eq:robustfunction}	   &=\frac{1}{2}\bigg(\max\Big\{\,|f(v)|\,\Big|\, |f(w)|\leq 1\,\forall w\in S\Big\}-1\bigg),
\end{eqnarray}
where in (\ref{eq:robustfunction}) we consider linear real functionals on the real affine space $\mathcal{A}$ of which $S$ is a subset.
\end{proposition}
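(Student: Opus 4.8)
The plan is to prove the two identities in turn, deriving \eqref{eq:robustfunction} from \eqref{eq:robustsum} by convex duality.

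Identity \eqref{eq:robustsum} is elementary. For ``$\le$'', the affine decomposition $v=(1+r_S(v))\,w^+-r_S(v)\,w^-$ with $w^\pm\in S$ noted after \eqref{eq:robustness} is of the required form and has $\sum_i|c_i|=1+2r_S(v)$, so the minimum on the right is at most $1+2r_S(v)$. For ``$\ge$'', given any decomposition $v=\sum_ic_iw_i$ with $w_i\in S$ and $\sum_ic_i=1$, set $\alpha=\sum_{i:c_i>0}c_i$ and $\beta=\sum_{i:c_i<0}|c_i|$, so that $\alpha-\beta=1$ and $\sum_i|c_i|=1+2\beta$. If $\beta=0$ then $v\in S$ and the bound is trivial; otherwise $\alpha=1+\beta>0$, the convex averages $w^+=\alpha^{-1}\sum_{i:c_i>0}c_iw_i$ and $w^-=\beta^{-1}\sum_{i:c_i<0}|c_i|w_i$ lie in $S$, and $v=(1+\beta)w^+-\beta w^-$ shows that $t=\beta$ is feasible in \eqref{eq:robustness}, i.e.\ $r_S(v)\le\beta=\frac12(\sum_i|c_i|-1)$. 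Minimizing over decompositions gives \eqref{eq:robustsum}, and incidentally shows that the minimum there is attained, with value $1+2r_S(v)$.

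For \eqref{eq:robustfunction}, one inequality is immediate from the triangle inequality: if $v=\sum_ic_iw_i$ is an optimal decomposition and $f$ is affine with $|f|\le1$ on $S$, then $|f(v)|=\big|\sum_ic_if(w_i)\big|\le\sum_i|c_i|$, so $\max_f|f(v)|\le\min\{\sum_i|c_i|\}$. The reverse inequality is the crux and is a Hahn--Banach/separation statement. I would lift $\mathcal{A}$ to the affine hyperplane at height $1$ in the vector space $W=U\oplus\mathbb{R}$, where $U$ is the difference space of $\mathcal{A}$, so that affine functionals on $\mathcal{A}$ are exactly the linear functionals on $W$; write $\hat v$ for the lift of $v$. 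Let $C=\mathrm{conv}\big(\hat S\cup(-\hat S)\big)\subseteq W$, which is symmetric and, since $S$ affinely spans $\mathcal{A}$, spans $W$ linearly, so $0\in\mathrm{int}\,C$ and the gauge $\mu_C$ is finite. Two observations then finish the argument: (i) ``$|f|\le1$ on $S$'' means precisely that the lifted functional lies in the polar $C^\circ$, and by symmetry $\max\{|f(\hat v)|:f\in C^\circ\}=\sup\{f(\hat v):f\in C^\circ\}=\mu_C(\hat v)$ (the support function of $C^\circ$ is the gauge of $C$---equivalently, take a supporting hyperplane of $\mu_C(\hat v)\,C$ at $\hat v$); (ii) reading off the last coordinate, $\hat v\in\lambda C$ holds iff $v=\sum_ic_iw_i$ with $w_i\in S$, $\sum_ic_i=1$ and $\sum_i|c_i|\le\lambda$, so that $\mu_C(\hat v)=\min\{\sum_i|c_i|\}$. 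Combining (i), (ii) and \eqref{eq:robustsum} gives $\max_f|f(v)|=1+2r_S(v)$, i.e.\ \eqref{eq:robustfunction}.

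The step I expect to need the most care is the duality in (i): the supporting-hyperplane/bipolar argument requires $C$ to be closed. This holds automatically whenever $S$ is compact, which is the case in every application in this paper---$\mathcal{L}$ and $\mathcal{NS}$ are polytopes, $\mathcal{Q}$ is a closed bounded convex set, and $\mathbb{S}$, $\mathbb{D}$ are compact in any fixed dimension, with the variable-dimension quantities reducing to these via Proposition~\ref{prop:equalityrobustness}. In the general (merely closed convex) case one runs the argument with $\overline{C}$, using that the minimum in \eqref{eq:robustsum} is attained (shown above) together with the finite-dimensionality of $\mathcal{A}$---and Carath\'{e}odory's theorem to cap the number of terms in the decompositions---to identify $\mu_{\overline{C}}(\hat v)$ with that minimum.
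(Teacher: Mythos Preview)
Your proof of \eqref{eq:robustsum} is essentially identical to the paper's: both split an arbitrary decomposition into its positive- and negative-coefficient parts, form the convex averages $w^\pm\in S$, and read off the feasible $t=\beta$ in the definition of robustness; the other direction is the trivial two-term decomposition $v=(1+r_S(v))w^+-r_S(v)w^-$.

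For \eqref{eq:robustfunction} you actually do more than the paper, which simply refers the reader to~\cite{communicationcomplexity} for ``the less straightforward proof''. Your route---lift $\mathcal{A}$ to the hyperplane at height $1$ in $W=U\oplus\mathbb{R}$, set $C=\mathrm{conv}(\hat S\cup(-\hat S))$, and identify $\min\sum_i|c_i|$ with the gauge $\mu_C(\hat v)$ and $\max_f|f(v)|$ with the support function of $C^\circ$---is a clean and correct convex-duality argument. The two observations (i) and (ii) are right: the last-coordinate bookkeeping in (ii) is exactly what forces $\sum_ic_i=1$, and the symmetry of $C$ makes the polar condition equivalent to $|f|\le1$ on $S$. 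Your closing caveat about closedness of $C$ is appropriate and your resolution (compactness in all the paper's applications; Carath\'eodory plus finite-dimensionality in general) is adequate for a sketch. So your treatment of \eqref{eq:robustfunction} is a genuine addition relative to the paper, and a standard, self-contained one at that.
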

\begin{proof}
For completeness we will prove (\ref{eq:robustsum}). A similar proof appears in \cite{communicationcomplexity}; in the same reference the reader will find the less straightforward proof of (\ref{eq:robustfunction}).

That $r_S(v)$ is greater than the right-hand side of (\ref{eq:robustsum}) is clear, because we can write $v=(1+r_S(v))w^+-r_S(v)w^-$. On the other hand, consider an optimal decomposition $v=\sum_i c_i w_i$, achieving the minimum on the right-hand side of (\ref{eq:robustsum}). We can write
\begin{eqnarray*}
v&=\sum_i c_i w_i\\
  &= \sum_{c_i\geq 0} c_i w_i + \sum_{c_i< 0} c_i w_i \\
  &= \sum_{c_i\geq 0} |c_i| \frac{\sum_{c_i\geq 0} |c_i| w_i}{\sum_{c_i\geq 0} |c_i| } - \sum_{c_i< 0}|c_i|\frac{\sum_{c_i< 0}|c_i| w_i }{\sum_{c_i< 0}|c_i|}\\
  &= \left(1+\sum_{c_i< 0}|c_i|\right) w^+ - \left(\sum_{c_i< 0}|c_i|\right) w^-,
\end{eqnarray*}
having defined $w^+=\frac{\sum_{c_i\geq 0} |c_i| w_i}{\sum_{c_i\geq 0} |c_i| }$ and $w^-=\frac{\sum_{c_i< 0}|c_i| w_i }{\sum_{c_i< 0}|c_i|}$ and taken into account that $\sum_ic_i= \sum_{c_i\geq 0} |c_i| - \sum_{c_i< 0} |c_i| =1$. So, we see that $r_S(v)\leq \sum_{c_i< 0} |c_i|= (\sum_i |c_i|-1)/2$.
\end{proof}

This implies the following:
\begin{eqnarray}
\phantom{\textrm{with }}\quad  r_\mathcal{L}(\vec{p})&=& (b_\cL(\vec{p})-1)/2\\
\textrm{with }\quad  b_\cL(\vec{p})&:=&\min\Bigg\{\sum_i |c_i| \Bigg| \,\vec{p} = \sum_i c_i \vec{p^i}_\mathcal{L}, \vec{p^i}_\mathcal{L}\in \mathcal{L}, \sum_i c_i =1\Bigg\}\\
				&=& \max\Bigg\{\,|B_\mathcal{L}(\vec{p})|\,\Bigg|\, |B_\mathcal{L}(\vec{p}_\mathcal{L})|\leq 1\,\forall \vec{p}_\mathcal{L}\in \mathcal{L}\Bigg\},\\[8pt]
%\phantom{\textrm{with }}\quad  r_\mathcal{Q}(\vec{p}) &=& (b_\cQ(\vec{p})-1)/2\\
%\textrm{with }\quad b_\cQ(\vec{p}) &:=& \min\Bigg\{\sum_i |c_i| \Bigg| \,\vec{p} = \sum_i c_i \vec{p^i}_\mathcal{L}, \vec{p^i}_\mathcal{L}\in \mathcal{L}, \sum_i c_i =1\Bigg\}\\
%				&=&\max\Bigg\{\,|B_\mathcal{L}(\vec{p})|\,\Bigg|\, |B_\mathcal{L}(\vec{p}_\mathcal{L})|\leq 1\,\forall \vec{p}_\mathcal{L}\in \mathcal{L}\Bigg\},\\[7pt]
\label{eq:Qrobustness}\phantom{\textrm{with }}\quad r_\mathcal{Q}(\vec{p}) &=& (b_\cQ(\vec{p})-1)/2\\
\textrm{with }\quad b_\cQ(\vec{p}) &:=&\min\left\{\sum_i |c_i|\left| \,\vec{p} = \sum_i c_i \vec{p^i}_\mathcal{Q}, \vec{p^i}_\mathcal{Q}\in \mathcal{L}, \sum_i c_i =1\right\}\right.\\
				&=& \max\Big\{\,|B_\mathcal{Q}(\vec{p})|\,\Big|\, |B_\mathcal{Q}(\vec{p}_\mathcal{Q})|\leq 1\,\forall \vec{p}_\mathcal{Q}\in \mathcal{Q}\Big\},\\[8pt]
\phantom{\textrm{with }}\quad r_\mathbb{S}(O) &=& (w_\mathbb{S}(O)-1)/2\\
\textrm{with }\quad w_\mathbb{S}(O)&:=& \min\left\{\sum_i |c_i|\left| \,O = \sum_i c_i \sigma_i, \sigma_i\in \mathbb{S}, \sum_i c_i =1\right\}\right.\\
				&=& \max\Big\{\,|\Tr(WO)|\,\Big|\, |\Tr(W\sigma)|\leq 1\,\forall \sigma\in\mathbb{S}\Big\},\\[8pt]
\phantom{\textrm{with }}\quad r_\mathbb{D}(O) &=&  (w_\mathbb{D}(O)-1)/2\\
\textrm{with }\quad w_\mathbb{D}(O)&:=&\min\left\{\sum_i |c_i|\left| \,O = \sum_i c_i \rho_i, \rho_i\in \mathbb{D}, \sum_i c_i =1\right\}\right.\\
				&=& \max\Big\{\,|\Tr(WO)|\,\Big|\, |\Tr(W\rho)|\leq 1\,\forall \rho\in\mathbb{D}\Big\}.
\end{eqnarray}
In the above we have used the notation $B_{\mathcal{L}/\mathcal{Q}}(\vec{p})=\sum_{a,b,x,y}B_{\mathcal{L}/\mathcal{Q}}(a,b|x,y)p(a,b|x,y)$, $B_{\mathcal{L}/\mathcal{Q}}(a,b|x,y)\in\mathbb{R}$, and considered operators $W=W^\dagger$. The functions $B_{\mathcal{L}/\mathcal{Q}}$ play the role of Bell parameters/values~\cite{nonlocalityreview}. The constraint $|B_\mathcal{L}(\vec{p}_\mathcal{L})|\leq 1$ can be seen as a Bell inequality~\cite{bell} bounding classical correlations, with $B_\mathcal{L}(a,b|x,y)$ the coefficients in the inequality; $|B_\mathcal{Q}(\vec{p}_\mathcal{Q})|\leq 1$ can instead be seen as a Tsirelson inequality~\cite{tsirelson}, bounding quantum correlations. On the other hand, the $W$'s play the role of witnesses. The constraint $|\Tr(W\sigma)|\leq 1$ for all $\sigma\in\mathbb{S}$ can be seen as a condition for an entanglement witness~\cite{entanglementreview}, normalized and rescaled differently than usual, as typically it is asked that $\Tr(W\sigma)\geq 0$ for all $\sigma\in\mathbb{S}$, with potentially $\Tr(W\rho)<0$ for some entangled state $\rho$. The request  $|\Tr(W\rho)|\leq 1\,\forall \rho\in\mathbb{D}$ can instead be seen as a condition for a witness of lack of positivity; also here, the approach is different than usual, as in order to detect non-positivily one would typically consider $W$'s that are positive operators themselves and check the condition $\Tr(W\rho)\geq 0$. Notice that because of linearity, we could have simply used a unified notation based on the notion of inner product (e.g., $\Tr(WO)$ is nothing else than the Hilbert-Schmidt inner product between $W$ and $O$).
\section{Quantifying beyond-quantum correlations in the unified operator formalism}

In the following we will focus on making use of the unified framework for correlations in the analysis and quantification of correlations beyond quantum. In particular, we will find alternative expressions for $r_\mathbb{D}(\vec{p})=r_\mathcal{Q}(\vec{p})$.

Let us start by introducing some notation. The trace norm of an operator $X$ is defined as $\|X\|_1:=\Tr\sqrt{X^\dagger X}$, i.e., as the sum $\sum_i\sigma_i(X)$ of all the singular values of $X$. If $X$ is Hermitian, $X=X^\dagger=\sum_i x_i \ket{x_i}\bra{x_i}$, with $x_i\in\mathbb{R}$ the eigenvalues of $X$ and $\{\ket{x_i}\}$ its eigenbasis, then $\|X\|=\sum_i |x_i|$.

Given two Hermitian operators $X$ and $Y$, we define their trace-norm distance as
\[
D(X,Y):=\frac{1}{2}\|X-Y\|_1,
\]
where we included a normalizing factor $1/2$ such that $D(\rho,\sigma)=1$ for orthogonal states $\rho$ and $\sigma$.

Notice that, since $\Tr O =1 $ for all (pseudo-)states, $(\|O\|_1-1)/2$ corresponds to the sum of the absolute values of the negative eigenvalues of $O$. Thus, it is a clearcut quantifier of how non-positive $O$ is~\cite{negativity}. Similarly $\min_{\rho\in\mathbb{D}} D(O,\rho)$ quantifies how different from a quantum state $O$ is. We find the following relations,  which, with the exception of the interpretation in terms of distance, were derived in~\cite{negativity} in a different context---the quantification of entanglement by means of partial transpositon and \emph{negativity}.
\begin{lemma}
\label{lem:negativity}
It holds
\[
r_\mathbb{D}(O)=\min_\rho D(O,\rho) = \frac{\|O\|_1 - 1}{2}.
\]
\end{lemma}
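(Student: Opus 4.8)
The plan is to prove the two equalities $r_\mathbb{D}(O)=\min_\rho D(O,\rho)$ and $\min_\rho D(O,\rho)=(\|O\|_1-1)/2$ separately, and in each case to establish the inequality in both directions. Throughout I will use that $O$ is Hermitian with $\Tr O=1$, so that its eigenvalues $\{o_i\}$ are real and sum to $1$; write $O=O_+ - O_-$ for the Jordan decomposition into its positive and negative parts, so that $O_\pm\geq 0$, $O_+O_-=0$, and $\|O\|_1=\Tr O_+ + \Tr O_-$. Setting $N:=\Tr O_- = \sum_{o_i<0}|o_i|$, the unit-trace condition gives $\Tr O_+ = 1+N$, hence $(\|O\|_1-1)/2 = N$, the quantity that $(\|O\|_1-1)/2$ is claimed to equal.

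First I would handle $r_\mathbb{D}(O)=(\|O\|_1-1)/2=N$. For the upper bound $r_\mathbb{D}(O)\leq N$: if $N=0$ then $O\geq 0$ is already a state and there is nothing to prove; if $N>0$, take $t=N$ and the candidate noise state $\rho_\mathbb{D} := O_-/N \in\mathbb{D}$. Then $(O+t\rho_\mathbb{D})/(1+t) = (O_+ - O_- + O_-)/(1+N) = O_+/(1+N)$, which is positive semidefinite and has unit trace since $\Tr O_+ = 1+N$; hence it lies in $\mathbb{D}$, so $r_\mathbb{D}(O)\le N$. For the lower bound $r_\mathbb{D}(O)\geq N$: suppose $(O+t\rho)/(1+t)=\rho'$ with $\rho,\rho'\in\mathbb{D}$ and $t\geq 0$, i.e. $O = (1+t)\rho' - t\rho$. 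Taking the trace norm and using the triangle inequality together with $\|\rho'\|_1=\|\rho\|_1=1$ gives $\|O\|_1 \le (1+t)+t = 1+2t$, so $t\geq(\|O\|_1-1)/2=N$. This also simultaneously proves, via the characterization $r_\mathbb{D}(O)=(w_\mathbb{D}(O)-1)/2$ with $w_\mathbb{D}$ the minimal $\ell_1$-norm affine decomposition into states, the identity $w_\mathbb{D}(O)=\|O\|_1$; alternatively one invokes the affine-combination form $O=(1+r_\mathbb{D}(O))\rho^+ - r_\mathbb{D}(O)\rho^-$ of the robustness directly.

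Next, $\min_\rho D(O,\rho)=N$. For the upper bound, choose $\rho=O_+/(1+N)$ (the normalized positive part); then $O-\rho = O_+ - O_- - O_+/(1+N) = \tfrac{N}{1+N}O_+ - O_-$, and since $O_+$ and $O_-$ have orthogonal supports the trace norm splits as a sum: $\|O-\rho\|_1 = \tfrac{N}{1+N}\Tr O_+ + \Tr O_- = \tfrac{N}{1+N}(1+N) + N = 2N$, so $D(O,\rho)=N$. For the lower bound: for any $\rho\in\mathbb{D}$, $\|O-\rho\|_1\geq \Tr\big[P(\rho-O)\big]$ where $P$ is the projector onto the (negative) eigenspaces of $O$ where the negative eigenvalues live; since $\Tr[P\rho]\geq 0$ and $\Tr[PO] = -N$, we get $\|O-\rho\|_1 \geq 0-(-N)=N$, which is only half of what is needed. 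To get the factor $2$, I would instead use the standard variational formula $\|X\|_1 = \max_{-\I\le M\le \I}\Tr(MX)$ with the specific choice $M = P_+ - P_-$, $P_\pm$ the spectral projectors of $O$ onto strictly positive / strictly negative eigenvalues: then $\Tr[M(O-\rho)] = \Tr O_+ + \Tr O_- - \Tr[M\rho] = \|O\|_1 - \Tr[M\rho] \ge \|O\|_1 - 1 = 2N$ since $\Tr[M\rho]\le \|\rho\|_1 = 1$. Hence $D(O,\rho)\geq N$ for all $\rho$, completing the chain of equalities.

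The only genuinely delicate point is the lower bound $\min_\rho D(O,\rho)\geq N$: the "obvious" one-sided projector argument loses a factor of two, and one must use the full two-sided optimal-witness form of the trace norm (or, equivalently in the language of the alternative characterizations earlier in the paper, exhibit the Hermitian witness $W=P_+-P_-$ with $|\Tr(W\rho)|\le 1$ on states and $\Tr(WO)=\|O\|_1$). Everything else is a short direct computation exploiting the orthogonality of the supports of $O_+$ and $O_-$, and the reference to \cite{negativity} can be cited for the part of the statement not involving the distance interpretation.
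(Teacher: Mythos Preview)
Your proof is correct and for the robustness part $r_\mathbb{D}(O)=(\|O\|_1-1)/2$ it is essentially identical to the paper's: Jordan decomposition for the upper bound, triangle inequality on an optimal decomposition for the lower bound. The distance upper bound via $\rho=O_+/(1+N)$ is also the same choice as the paper's $\rho_+$.

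The one genuine difference is the distance lower bound, which you single out as ``the only genuinely delicate point''. It is actually the least delicate: the paper simply applies the reverse triangle inequality,
\[
\|O-\rho\|_1 \;\geq\; \big|\,\|O\|_1 - \|\rho\|_1\,\big| \;=\; \|O\|_1 - 1,
\]
valid for any state $\rho$, and is done in one line. Your route via the variational formula $\|X\|_1=\max_{-\I\le M\le \I}\Tr(MX)$ with the witness $M=P_+-P_-$ is correct and recovers the same bound, and it has the virtue of connecting explicitly to the witness characterization $w_\mathbb{D}(O)=\max\{|\Tr(WO)|:|\Tr(W\rho)|\le 1\ \forall\rho\in\mathbb{D}\}$ discussed earlier in the paper; but it is more machinery than needed. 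Your initial one-sided projector attempt that ``loses a factor of two'' is a red herring---the reverse triangle inequality is the clean way to get the full factor immediately.
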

\begin{proof}
Any Hermitian $O=\sum_i o_i \ket{o_i}\bra{o_i}$ admits a Jordan decomposition into its positive and negative parts, i.e.,
\[
O=O_+ - O_-,\quad O_+ = \sum_{o_i\geq 0} o_i \ket{o_i}\bra{o_i},\quad O_- = \sum_{o_i< 0} |o_i| \ket{o_i}\bra{o_i}.
\]
So we can write
\[
O = \Tr O_+ \rho_+ - \Tr O_-\rho_,%- (1+ \Tr O_-) \rho_+ - \Tr O_-\rho_-
\]
with $\rho_\pm= O_\pm/\Tr O_\pm$. Notice that from $\|O\|_1=\Tr O_+ + \Tr O_-$ and $\Tr O_+ - \Tr O_- =\Tr O =1$ we get
$\Tr O_+ = 1+\Tr O_-$  and $\Tr O_- = (\|O\|_1-1)/2$. So $r_\mathbb{D}(O)\leq \Tr O_-= (\|O\|_1-1)/2$.%So, $r_\mathbb{D}(O)\leq \Tr O_- = (\|O\|_1-1)/2$.

On the other hand, consider an optimal---for the sake of $r_\mathbb{D}(O)$---decomposition $O=(1+r_\mathbb{D}(O))\rho - r_\mathbb{D}(O)\rho'$, from which, using the triangle inequality, we find
\begin{eqnarray}
\nonumber \|O\|_1&=\|(1+r_\mathbb{D}(O))\rho - r_\mathbb{D}(O)\rho'\|_1\\
\nonumber			&\leq (1+r_\mathbb{D}(O))\|\rho\|_1 + r_\mathbb{D}(O)\|\rho'\|_1\\
\nonumber		&=1+2 r_\mathbb{D}(O)).
\end{eqnarray}
so that, $r_\mathbb{D}(O))\leq (\|O\|_1 - 1)/2$.

Overall, we proved $r_\mathbb{D}(O))= (\|O\|_1 - 1)/2$. To prove the remaining claim, observe that $2D(O,\rho_+)=\|O-\rho_+\|_1=\|O\|_1-\tr\rho_+=\|O\|_1 -1$. Observe also that, for any state $\rho$, $\|O-\rho\|_1\geq \|O\|_1 - \|\rho\|= \|O\|_1 - 1$. So, $\min_\rho D(O,\rho)=D(O,\rho_+)=(\|O\|_1 - 1)/2$.
\end{proof}

Thus, combining Lemma~\ref{lem:negativity} with Proposition~\ref{prop:equalityrobustness}, we arrive at
\begin{theorem}
It holds
\beq
\label{eq:Onegativity}
r_\mathbb{D}(\vec{p})=r_\mathcal{Q}(\vec{p})=\min_{O\rightarrow\vec{p}} \min_{\rho}D(O,\rho)=\min_{O\rightarrow\vec{p}}\frac{\|O\|_1 - 1}{2}.
\eeq
\end{theorem}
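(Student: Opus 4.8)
The plan is essentially to chain together two results already established in the excerpt, so the proof is short. First I would invoke Proposition~\ref{prop:equalityrobustness}, which gives the equality $r_\mathcal{Q}(\vec{p})=r_\mathbb{D}(\vec{p})$, and moreover (as remarked immediately after that proposition) tells us that the infimum in the definition $r_\mathbb{D}(\vec{p})=\inf_{O\rightarrow\vec{p}} r_\mathbb{D}(O)$ is actually achieved as a minimum. So $r_\mathcal{Q}(\vec{p})=r_\mathbb{D}(\vec{p})=\min_{O\rightarrow\vec{p}} r_\mathbb{D}(O)$.

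Next I would apply Lemma~\ref{lem:negativity} termwise inside that minimum: for each fixed Hermitian unit-trace $O$ with $O\rightarrow\vec{p}$, the lemma asserts $r_\mathbb{D}(O)=\min_\rho D(O,\rho)=(\|O\|_1-1)/2$. Substituting this identity into $\min_{O\rightarrow\vec{p}} r_\mathbb{D}(O)$ immediately yields
\[
r_\mathbb{D}(\vec{p})=r_\mathcal{Q}(\vec{p})=\min_{O\rightarrow\vec{p}}\min_{\rho}D(O,\rho)=\min_{O\rightarrow\vec{p}}\frac{\|O\|_1-1}{2},
\]
which is exactly \eqref{eq:Onegativity}. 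The only very minor point worth spelling out is that the nested minimization $\min_{O\rightarrow\vec{p}}\min_\rho D(O,\rho)$ is well-defined: the inner minimum over states $\rho$ exists for each $O$ by Lemma~\ref{lem:negativity}, and the outer minimum over $O$ exists because it equals $\min_{O\rightarrow\vec{p}} r_\mathbb{D}(O)$, whose attainment is guaranteed by the remark following Proposition~\ref{prop:equalityrobustness}.

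There is no real obstacle here — the theorem is a corollary obtained by composing Proposition~\ref{prop:equalityrobustness} and Lemma~\ref{lem:negativity}. If anything, the one thing to be careful about is bookkeeping of whether the outer optimization is an infimum or a minimum: one should state explicitly that it is a minimum precisely because Proposition~\ref{prop:equalityrobustness} shows the infimum defining $r_\mathbb{D}(\vec{p})$ (equivalently $r_\mathbb{S}(\vec{p})$) is attained by an operator $O$ built via the Lemma~\ref{lem:Oadd} construction from a finite-dimensional realization. With that noted, the displayed chain of equalities is immediate and the proof is complete.
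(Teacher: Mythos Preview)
Your proposal is correct and matches the paper's approach exactly: the paper also presents the theorem as an immediate combination of Proposition~\ref{prop:equalityrobustness} and Lemma~\ref{lem:negativity}, without further argument. Your additional remarks about the outer infimum being attained are a welcome clarification but do not depart from the paper's reasoning.
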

We remark that the rightmost-side-hand of (\ref{eq:Onegativity}) provides a very simple way of checking that $r_\mathbb{D}(\vec{p})$ does not increase under local processing with side non-resources. Indeed, let $\vec{p'}$ be a box obtained processing $\vec{p}$ with a shared quantum state $\rho$ and local measurements. Suppose $O\rightarrow \vec{p}$, with $O$ optimal for the sake of $r_\mathbb{D}(\vec{p})$. Then
\begin{eqnarray*}
r_\mathbb{D}(\vec{p'})&=\min_{O'\rightarrow\vec{p'}}\frac{\|O'\|_1 - 1}{2}\\
			&\leq \frac{\|O\otimes\rho\|_1 - 1}{2}\\
			&=\frac{\|O\otimes\|_1\|\rho\|_1 - 1}{2}\\
			&=\frac{\|O\|_1 - 1}{2}=r_\mathbb{D}(\vec{p}).
\end{eqnarray*}
The inequality is due to the fact that, as discussed in Section~\ref{sec:localprocessing}, we can obtain $\vec{p'}$ by measuring locally $O\otimes\rho$.

\subsection{Beyond quantum violation of Bell inequalities}

Suppose that the only information that we have about $\vec{p}$ is that it violates some Bell inequality beyond the extent allowed by quantum mechanics. Can we bound the ``lack of physicality''---i.e., the violation of positivity---of the underlying pseudo-state with such knowledge? We can, as follows.

\begin{proposition}
\label{pro:minneg}
For any Bell inequality $B_\mathcal{L}$ is holds
\beq
\label{eq:grothendiek1}
\frac{|B_\mathcal{L}(\vec{p})|/B_\mathcal{L}^{\mathcal{Q}_\textrm{max}}-1}{2}\leq r_\mathbb{D}(\vec{p}),
\eeq
with $B_\mathcal{L}^{\mathcal{Q}_\textrm{max}}=\max_{\vec{p}_\mathcal{Q}\in\mathcal{Q}}|B_\mathcal{L}(\vec{p}_\mathcal{Q})|$. Moreover,
\beq
\label{eq:grothendiek2}
\frac{b_\mathcal{L}(\vec{p})/b_\mathcal{L}^{\mathcal{Q}_\textrm{max}}-1}{2}\leq r_\mathbb{D}(\vec{p}),
\eeq
with $b_\mathcal{L}^{\mathcal{Q}_\textrm{max}}=\max_{\vec{p}_\mathcal{Q}\in\mathcal{Q}}b_\cL(\vec{p}_\mathcal{Q})=\max_{B_\cL}B_\mathcal{L}^{\mathcal{Q}_\textrm{max}}$.%\max_{\vec{p}_\mathcal{Q}\in\mathcal{Q}}B_\mathcal{L}(\vec{p}_\mathcal{Q})$.
\end{proposition}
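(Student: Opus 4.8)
The plan is to reduce both inequalities to a single "duality" argument: a Bell functional $B_\mathcal{L}$, once rescaled by its quantum maximum, becomes a legitimate witness of lack of positivity at the level of operators, and then the characterization $r_\mathbb{D}(O)=(w_\mathbb{D}(O)-1)/2=\frac12(\max\{|\Tr(WO)|\,:\,|\Tr(W\rho)|\le1\,\forall\rho\in\mathbb{D}\}-1)$ from the Proposition on alternative characterizations of robustness, together with $r_\mathbb{D}(\vec p)=\min_{O\to\vec p}r_\mathbb{D}(O)$ (Proposition~\ref{prop:equalityrobustness} and its corollary that the infimum is a minimum), does the rest.

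First I would prove \eqref{eq:grothendiek1}. Fix a Bell functional $B_\mathcal{L}$ with coefficients $B_\mathcal{L}(a,b|x,y)$, and let $O$ be any pseudo-state with $O\to\vec p$ via local POVMs $\{M_{a|x}\},\{N_{b|y}\}$. Define the Hermitian operator
\[
W:=\frac{1}{B_\mathcal{L}^{\mathcal{Q}_\textrm{max}}}\sum_{a,b,x,y}B_\mathcal{L}(a,b|x,y)\,M^A_{a|x}\otimes N^B_{b|y}.
\]
The key observation is the \emph{witness property}: for every quantum state $\rho\in\mathbb{D}$, measuring $\rho$ with these same POVMs yields a quantum correlation $\vec p_\rho\in\mathcal{Q}$, so $\Tr(W\rho)=B_\mathcal{L}(\vec p_\rho)/B_\mathcal{L}^{\mathcal{Q}_\textrm{max}}$, and by definition of $B_\mathcal{L}^{\mathcal{Q}_\textrm{max}}$ this has absolute value $\le 1$. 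Hence $|\Tr(W\rho)|\le1$ for all $\rho\in\mathbb{D}$, so $W$ is admissible in the dual characterization of $w_\mathbb{D}$. On the other hand $\Tr(WO)=B_\mathcal{L}(\vec p)/B_\mathcal{L}^{\mathcal{Q}_\textrm{max}}$ because $O\to\vec p$ via exactly these POVMs. Therefore
\[
w_\mathbb{D}(O)\ge|\Tr(WO)|=\frac{|B_\mathcal{L}(\vec p)|}{B_\mathcal{L}^{\mathcal{Q}_\textrm{max}}},
\qquad\text{so}\qquad
r_\mathbb{D}(O)=\frac{w_\mathbb{D}(O)-1}{2}\ge\frac{|B_\mathcal{L}(\vec p)|/B_\mathcal{L}^{\mathcal{Q}_\textrm{max}}-1}{2}.
\]
Since this holds for every $O\to\vec p$, taking the minimum over such $O$ gives $r_\mathbb{D}(\vec p)\ge(|B_\mathcal{L}(\vec p)|/B_\mathcal{L}^{\mathcal{Q}_\textrm{max}}-1)/2$, which is \eqref{eq:grothendiek1}.

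For \eqref{eq:grothendiek2} I would maximize \eqref{eq:grothendiek1} over all Bell functionals. By the max–min expression $b_\mathcal{L}(\vec p)=\max\{|B_\mathcal{L}(\vec p)|\,:\,|B_\mathcal{L}(\vec p_\mathcal{L})|\le1\,\forall\vec p_\mathcal{L}\in\mathcal{L}\}$, the quantity $b_\mathcal{L}(\vec p)$ is attained at some optimal $B_\mathcal{L}^\star$ normalized against $\mathcal{L}$. The stated identity $b_\mathcal{L}^{\mathcal{Q}_\textrm{max}}=\max_{B_\mathcal{L}}B_\mathcal{L}^{\mathcal{Q}_\textrm{max}}$ (where on the right each $B_\mathcal{L}$ is $\mathcal{L}$-normalized) tells us that for this optimal functional $B_\mathcal{L}^{\star,\mathcal{Q}_\textrm{max}}\le b_\mathcal{L}^{\mathcal{Q}_\textrm{max}}$; a little care is needed here regarding homogeneity, since \eqref{eq:grothendiek1} is scale-invariant in $B_\mathcal{L}$, so one may take $B_\mathcal{L}^\star$ with the normalization that makes the numerator equal to $b_\mathcal{L}(\vec p)$. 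Substituting $B_\mathcal{L}=B_\mathcal{L}^\star$ into \eqref{eq:grothendiek1} and using $B_\mathcal{L}^{\star,\mathcal{Q}_\textrm{max}}\le b_\mathcal{L}^{\mathcal{Q}_\textrm{max}}$ yields
\[
r_\mathbb{D}(\vec p)\ge\frac{|B_\mathcal{L}^\star(\vec p)|/B_\mathcal{L}^{\star,\mathcal{Q}_\textrm{max}}-1}{2}\ge\frac{b_\mathcal{L}(\vec p)/b_\mathcal{L}^{\mathcal{Q}_\textrm{max}}-1}{2},
\]
which is \eqref{eq:grothendiek2}.

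The main obstacle is keeping the normalization conventions straight in the second part: \eqref{eq:grothendiek1} is invariant under rescaling $B_\mathcal{L}$, whereas $b_\mathcal{L}(\vec p)$, $B_\mathcal{L}^{\mathcal{Q}_\textrm{max}}$ and $b_\mathcal{L}^{\mathcal{Q}_\textrm{max}}$ are all defined with respect to the $\mathcal{L}$-normalization $|B_\mathcal{L}(\vec p_\mathcal{L})|\le1$, so I must fix the convention once (say, $B_\mathcal{L}$ ranges over $\mathcal{L}$-normalized functionals) and then the inequality $B_\mathcal{L}^{\mathcal{Q}_\textrm{max}}\le b_\mathcal{L}^{\mathcal{Q}_\textrm{max}}$ is immediate from the definition of the latter as a maximum. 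The rest is routine: linearity of the operator-to-box map for fixed measurements, and the fact that quantum states fed through fixed local POVMs produce elements of $\mathcal{Q}$, which is exactly what makes the rescaled Bell functional a bona fide lack-of-positivity witness.
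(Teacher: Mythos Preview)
Your argument is correct, but the route you take for \eqref{eq:grothendiek1} is genuinely different from the paper's. The paper works entirely at the level of probability vectors: it invokes $r_\mathbb{D}(\vec p)=r_\mathcal{Q}(\vec p)=(b_\mathcal{Q}(\vec p)-1)/2$ via the \emph{primal} characterization of robustness (decompositions), picks an optimal affine decomposition $\vec p=\sum_i c^\mathcal{Q}_i\vec p^i_\mathcal{Q}$ into quantum correlations, and applies the triangle inequality to $B_\mathcal{L}$ to get $|B_\mathcal{L}(\vec p)|\le B_\mathcal{L}^{\mathcal{Q}_\textrm{max}}\sum_i|c^\mathcal{Q}_i|=B_\mathcal{L}^{\mathcal{Q}_\textrm{max}}\,b_\mathcal{Q}(\vec p)$. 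You instead work at the operator level and use the \emph{dual} characterization: from the Bell coefficients and the fixed POVMs you build an explicit Hermitian witness $W$, verify $|\Tr(W\rho)|\le1$ for all states (because states fed through local POVMs give quantum boxes), and conclude $w_\mathbb{D}(O)\ge|\Tr(WO)|$. The paper's proof is shorter and never leaves the box picture; your proof is more constructive and makes concrete the paper's theme that a (rescaled) Bell/Tsirelson inequality \emph{is} a lack-of-positivity witness in the operator formalism. For \eqref{eq:grothendiek2} both of you do the same thing: maximize over $\mathcal{L}$-normalized $B_\mathcal{L}$ and use $B_\mathcal{L}^{\star,\mathcal{Q}_\textrm{max}}\le b_\mathcal{L}^{\mathcal{Q}_\textrm{max}}$; your discussion of the normalization issue is on point and not a gap.
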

\begin{proof}
Let $\vec{p}=\sum_i c^\cQ_i\vec{p^i}_\cQ$ be an optimal decomposition of $\vec{p}$ for the sake of $r_\mathbb{D}(\vec{p})=r_\cQ (\vec{p})$ according to (\ref{eq:Qrobustness}), i.e., $r_\cQ (\vec{p})=(\sum_i |c^\cQ_i|-1)/2=(b_\cQ(\vec{p})-1)/2$. Then
\beq
\label{eq:grothendiekproof}
|B_\cL(\vec{p})|
=
|\sum_ic^\cQ_i B_\cL(\vec{p^i}_\cQ)|
\leq
B_\mathcal{L}^{\mathcal{Q}_\textrm{max}}\sum_i|c^\cQ_i|
=
B_\mathcal{L}^{\mathcal{Q}_\textrm{max}} b_\cQ(\vec{p})
\eeq
This gives (\ref{eq:grothendiek1}); upon maximization over $B_\cL$ of the leftmost and rightmost side of (\ref{eq:grothendiekproof}) we obtain (\ref{eq:grothendiek2}).
\end{proof}
The bound~\eqref{eq:grothendiek2} was already presented in~\cite{communicationcomplexity} in terms of $r_\mathcal{Q}(\vec{p})$. The bound~\eqref{eq:grothendiek1} is useful when we have information about a specific Bell inequality; a similar bound for the maximal non-local content $q_{\textrm{NL}}^\textrm{min}$ was presented in~\cite{nonlocalcontentbell}.

\subsection{Example: noisy Popescu-Rohrilich box}
\label{sec:PR}

The noisy Popescu-Rohrlich box~\cite{masanelgeneral}, which depends on a parameter $0\leq \epsilon \leq 1/2$, is defined via
\[
p_\epsilon(a,b|x,y)
:=
\left\{
\begin{array}{cl}
\frac{1-\epsilon}{2} &\textrm{if } a\oplus b =x\cdot y\\
\frac{\epsilon}{2} &\textrm{otherwise}
\end{array} 
\right.,
\]
with $a,b,x,y\in\{0,1\}$. For $\epsilon =0$ we recover the original Popescu-Rohrlich box~\cite{PR} $p_0(a,b|x,y)=p_\textrm{PR}(a,b|x,y)=1/2\, \delta_{a\oplus b,x\cdot y}$, while for $\epsilon=1/2$ we have the totally uncorrelated uniform probability distribution $p_{1/2}(a,b|x,y)=p_\textrm{rand}(a,b|x,y)=1/4$, for all $x,y,a,b$. In general, we can write
\[
p_\epsilon(a,b|x,y)=(1-2\epsilon) p_\textrm{PR}(a,b|x,y) + 2\epsilon p_\textrm{rand} (a,b|x,y).
\]
It can be easily checked that the noisy Popescu-Rohrlich box is no-signalling for all $0\leq \epsilon \leq 1/2$. On the other hand, the noisy Popescu-Rohrlich box is known to be not quantum for $0\leq\epsilon <(2-\sqrt{2})/4$.

In \cite{acin} it was shown that the original Pospescu-Rohrlich box can be obtained via \eqref{eq:oformalism} with
\[
O_\textrm{PR} := \frac{1+\sqrt{2}}{2} \ket{\psi^+}\bra{\psi^+}+\frac{1-\sqrt{2}}{2} \ket{\psi^-}\bra{\psi^-},
\]
where $\ket{\psi^\pm}=(\ket{00}\pm\ket{11})/\sqrt{2}$, and with the local POVMs (actually, projective measurements) $\{(\I\pm\sigma_x)/2\}$ and $\{(\I\pm\sigma_y)/2\}$ for Alice, and $\{(\I\pm(\sigma_x-\sigma_y)/\sqrt{2})/2\}$ and $\{(\I\pm(\sigma_x+\sigma_y)/\sqrt{2})/2\}$ for Bob. One checks that, more in general, he noisy Popescu-Rohrlich box can be obtained from
\beq
\label{eq:noisyO}
O_\epsilon:=\frac{1+\sqrt{2}(1-2\epsilon)}{2} \ket{\psi^+}\bra{\psi^+}+\frac{1-\sqrt{2}(1-2\epsilon)}{2} \ket{\psi^-}\bra{\psi^-},
\eeq
 with the same local projective measurements. The question about $O_\textrm{PR}$ that was not addressed  in~\cite{acin}, and that we will answer more in general for $O_\epsilon$,  is whether these operators violate the condition of positivity minimally, just enough to realize  $\vec{p}_\epsilon$. We answer this in the affirmative, making use of the result of Proposition~\ref{pro:minneg}. Notice that the bounds \eqref{eq:grothendiek1} and \eqref{eq:grothendiek2} are bounds on the trace norm of any pseudo-state $O$ such that $O\rightarrow\vec{p}$. More explicitly, they can be cast as
\beq
\label{eq:tsirelsonO}
|B_\mathcal{L}(\vec{p})|/B_\mathcal{L}^{\mathcal{Q}_\textrm{max}} \leq \|O\|_1
\eeq
and
\[
b_\mathcal{L}(\vec{p})/b_\mathcal{L}^{\mathcal{Q}_\textrm{max}}\leq \|O\|_1,
\]
respectively. As Bell inequality, i.e., function $B_\mathcal{L}$, we will use the CHSH inequality~\cite{CHSH}
\[
B_{\mathcal{L}, \textrm{CHSH}}(\vec{p})= \frac{1}{2}\sum_{a,b,x,y=0}^1(-1)^{(a\oplus b)\oplus (x\cdot y)}p(a,b|x,y),
\]
where the normalization factor on the right-hand side is chosen so that $|B_{\mathcal{L}, \textrm{CHSH}}(\vec{p}_\cL)|\leq1$ for all $\vec{p}_\cL\in \cL$.
One finds $B_{\mathcal{L},\textrm{CHSH}}(\vec{p}_\epsilon)=2(1-2\epsilon)$, while it is known that $B^{\mathcal{Q}_\textrm{max}}_{\mathcal{L},\textrm{CHSH}}=\sqrt{2}$. Thus, from~\eqref{eq:tsirelsonO} we find that any $O$ such that $O\rightarrow\vec{p}_\epsilon$ satisfies $\|O\|_1\geq\sqrt{2}(1-2\epsilon)$. On the other hand, $O_\epsilon$ saturates such a bound, as from \eqref{eq:noisyO} one finds exactly $\|O_\epsilon\|_1=\sqrt{2}(1-2\epsilon)$, for $0\leq \epsilon\leq (2-\sqrt{2})/4$. Notice that the closest physical state to $O_\epsilon$ in that range of $\epsilon$ is always $\ket{\psi^+}\bra{\psi^+}$, with $O_{\epsilon=(2-\sqrt{2})/4}$ exactly equal to $\ket{\psi^+}\bra{\psi^+}$.
%\begin{table}[ht]
%\caption{\textit{Noisy} Popescu-Rohrlich Box Probability Distribution, $p_{nPR}(a,b|x,y)$}
%\centering
%\begin{tabular}{c|c|c|c|c|c|}
%
%	$y \backslash x$ &    & \multicolumn{2}{c}{0}&\multicolumn{2}{c}{1}  \\
%	\hline
%	\hline
%        ~ & $ b \backslash  a$ & 0   & 1   & 0   & 1   \\% \toprule
%        \hline
%        \hline
%        \multirow{2}{*}{0}  & 0   & $\frac{1-\epsilon}{2}$ & $\epsilon$   & $\frac{1-\epsilon}{2}$ & $\epsilon$   \\ \hline
%        ~ & 1   & $\epsilon$   & $\frac{1-\epsilon}{2}$ & $\epsilon$   & $\frac{1-\epsilon}{2}$ \\ \hline
%        \multirow{2}{*}{1} & 0   & $\frac{1-\epsilon}{2}$ & $\epsilon$   & $\epsilon$   & $\frac{1-\epsilon}{2}$ \\ \hline
%        ~ & 1   & $\epsilon$   & $\frac{1-\epsilon}{2}$ & $\frac{1-\epsilon}{2}$ & $\epsilon$   \\ \hline
%
%\end{tabular}
%\label{table:nPRbox}
%\end{table}

\section{Conclusions}

In this article we have illustrated how the unified framework for no-signalling correlations introduced in~\cite{acin} can be used in the quantification of correlations. The key aspect of the unified framework is that also beyond-quantum correlations have an ``almost quantum'' representation: they are still obtained via local quantum measurements, but the distant parties share pseudo-states, i.e., trace-one Hermitian operators that are not necessarily positive semidefinite. This opens up the possibility of, for example, casting the processing of all no-signalling correlations---including beyond-quantum ones that are usually represented solely as boxes with inputs/outputs---in terms of the processing of the underlying (pseudo-)state. Here, with in mind a resource-theory approach, we have looked at the simple example of local processing with side non-resources (e.g., side separable states, when we want to study as a resource correlations arising from entanglement). It would be interesting to  consider the unified operator formalism in more sophisticated processing scenarios~(see~\cite{devicente} and references therein).

The unified framework for correlations also allows one to connect  tightly the quantification of properties of correlations intended in the usual sense of probability distributions---how non-local, how beyond-quantum---to the quantification of properties of the underlying operator---how separable, how non-positive. We showed that, when using the notion of robustness of a certain property, this connection leads actually to an identification, e.g., the robustness of non-locality is the same as the entanglement robustness. Such a connection constitutes also a simple approach to the device-independent quantification of the properties, like entanglement, of the underlying states. The application and expension of our results to other measures of distinguishability for operators and probabilities---like relative entropy---will be explored elsewhere.

We focused in particular on the quantification of how beyond-quantum certain correlations are. We expect this to be useful in the quest to pin down---via some physical or information-theoretic principle---quantum correlations among all no-signallng correlations. One provocative way to cast such quest is that of asking how comes pseudo-states are not allowed. This is particularly relevant in the multipartite scenario which we have not considered here, as it was proven that there are pseudo-states that always lead to bona-fide probability distributions, even without constraints on the allowed measurements~\cite{acin}.

As a natural quantifier of beyond-quantum correlations, we have analyzed the connection between the beyond-quantum robustness of pseudo-states and of probability distributions. As already observed in~\cite{negativity}, the first coincides with the negativity of the pseudo-state. As an example of the concepts developed, we have looked at the specific case of the family of noisy Popescu-Rohrlich boxes, proving the optimality---in terms of negativity---of the underlying pseudo-state considered in~\cite{acin} for the case of the perfect Popescu-Rohrlich box, and here generalized to its noisy version.

\emph{Note added.} During the completion of this manuscript we became aware of the related work by De Vicente~\cite{devicente}.

\section*{Acknowledgements}

M. P. thanks A. Ac{\`i}n and D. Cavalcanti for useful discussions. This work has been supported by NSERC, CIFAR, and Ontario Centres of Excellence.

\section*{References}

\end{document}